\documentclass[a4paper,UKenglish,cleveref,pdfa]{lipics-v2021}

\pdfoutput=1
\hideLIPIcs
\nolinenumbers

\graphicspath{{./figures/}}

\usepackage{preamble}

\title{Temporal Path Covers: Dilworth Properties and Parameterized Complexity} 

\author{Lapo Cioni}
{Department of Statistic, Computer Science and Applications, Università degli Studi di Firenze, Firenze, Italy}
{lapo.cioni@unifi.it}
{https://orcid.org/0000-0002-4605-8473}{}

\author{Sotiris Kanellopoulos}
{National Technical University of Athens and Archimedes, Athena Research Center}
{s.kanellopoulos@athenarc.gr}
{https://orcid.org/0009-0006-2999-0580}{}

\author{Edouard Nemery}
{Universit\'{e} Paris Dauphine -- PSL, CNRS UMR7243, LAMSADE, Paris, France}
{edouard.nemery@dauphine.psl.eu}
{https://orcid.org/0009-0007-6977-9330}{}

\author{Aris Pagourtzis}
{National Technical University of Athens and Archimedes, Athena Research Center}
{pagour@cs.ntua.gr}
{https://orcid.org/0000-0002-6220-3722}{}

\author{Christos Pergaminelis}
{National Technical University of Athens and Archimedes, Athena Research Center}
{chr.pergaminelis@athenarc.gr}
{https://orcid.org/0009-0009-8981-3676}{}

\author{Manolis Vasilakis}
{Universit\'{e} Paris Dauphine -- PSL, CNRS UMR7243, LAMSADE, Paris, France
\and \url{https://manolisvasilakis.github.io}}
{emmanouil.vasilakis@dauphine.eu}
{https://orcid.org/0000-0001-6505-2977}{}

\authorrunning{L. Cioni, S. Kanellopoulos, E. Nemery, A. Pagourtzis, C. Pergaminelis, and M. Vasilakis} 

\Copyright{Jane Open Access and Joan R. Public} 

\ccsdesc[100]{Theory of computation~Problems, reductions and completeness}
\ccsdesc[500]{Theory of computation~Parameterized complexity and exact algorithms}

\keywords{Temporal Graphs, Dilworth Property, Lov{\'a}sz Number, Parameterized Complexity} 

\category{} 

\relatedversion{} 

\EventEditors{John Q. Open and Joan R. Access}
\EventNoEds{2}
\EventLongTitle{42nd Conference on Very Important Topics (CVIT 2016)}
\EventShortTitle{CVIT 2016}
\EventAcronym{CVIT}
\EventYear{2016}
\EventDate{December 24--27, 2016}
\EventLocation{Little Whinging, United Kingdom}
\EventLogo{}
\SeriesVolume{42}
\ArticleNo{23}

\begin{document}

\maketitle

\begin{abstract}
The \textsc{Minimum Temporal Path Cover} (TPC) and \textsc{Minimum Temporally Disjoint Path Cover} (TDPC) problems were introduced by [Chakraborty, Dailly, Foucaud, Klasing, MFCS '24]. Both were shown to be \NP-hard on temporal DAGs, while the latter is also \NP-hard on temporal oriented trees. All tractable cases for T(D)PC established in that paper satisfy a temporal \emph{Dilworth property}, namely that the size of the minimum T(D)PC is equal to the size of the maximum antichain. This raises a natural question: is T(D)PC polynomial-time solvable under the promise that the respective Dilworth property holds? In this work, we answer this question in the affirmative for both problems, proving in fact that, under the respective promise, the size of the minimum T(D)PC is exactly equal to the Lov{\'a}sz number of the connectivity graph.

In another direction, we establish parameterized algorithms and hardness results for TPC and TDPC. Our main result is that TPC is \W$[1]$-hard parameterized by the \emph{deletion distance to linear forest} even for temporal graphs with two time-steps, answering in the negative an open question by Chakraborty et al. about whether an {\XP} algorithm parameterized by treewidth plus number of time-steps can be improved to \FPT. On the other hand, we prove that an {\FPT} algorithm does exist if the vertex cover number is used as parameter instead of the treewidth in the above parameterization. We complement this with a proof that including the number of time-steps in the parameter is necessary to yield tractability, as,
otherwise, both TPC and TDPC remain \NP-hard even for constant vertex cover size.
Along the way, we establish various other para-\NP-hardness results involving structural parameters such as the pathwidth and the maximum degree of the underlying graph.

\end{abstract}

\newpage







\section{Introduction}

Temporal graphs extend the concept of static graph to include time-dependent information, with transitions through edges being possible only if the respective edge is active in the current time-step. As such, temporal graphs naturally model scenarios such as road networks, bus or train transits, and flight routes. Recent research has focused on pathfinding~\cite{temporal_beer_path_IWOCA,Path_Wu}, temporal walks and exploration~\cite{Himmel_optimal_walk,Erlebach_exploration}, temporal flows and cuts~\cite{Akrida_temporal_flow} and reachability~\cite{Deligkas_temporal_parameterized,Deligkas_shifting,Deligkas_reachability_IJCAI,DBLP:conf/isaac/Doring25}, among others. 
For a comprehensive overview on temporal graphs, we refer to~\cite{Holme_Temporal_graphs,Wang_Temporal_graphs}.

In an MFCS 2024 paper, Chakraborty, Dailly, Foucaud and Klasing~\cite{temporal_path_cover_Dailly_MFCS} introduced the \TPC\ (\TPCshort) and \TDPC\ (\TDPCshort) problems as the temporal analogues of \PC\ and \PPart\ respectively (cf.~\cite{Dusart_dpc,Foucaud_static_caldam,Hung_pc_dpc}), with both problems being \NP-hard even in temporal DAGs. Interestingly, all temporal graph classes with polynomial-time solvable T(D)PC established in that paper are shown to satisfy a (temporal) \emph{Dilworth property}: the size of the minimum T(D)PC is guaranteed to be equal to the size of the \emph{maximum (temporal) antichain}, that is, the maximum-cardinality set of pairwise temporally unreachable vertices. Note that it always holds that the size of the minimum T(D)PC is at least as large as the size of the maximum antichain, since any two vertices in an antichain cannot be covered by the same path. Thus, a natural question arises: does the promise of a Dilworth property guarantee tractability for \TPCshort\ and \TDPCshort?

In the same paper, an interesting contrast is shown between \TPCshort\ and \TDPCshort. On the one hand, the former is tractable in \emph{temporal oriented trees}, while the latter is \NP-hard on the same graph class.
On the other hand, the latter is shown to admit an {\FPT} algorithm parameterized by the treewidth of the underlying graph plus the number of time-steps, while the former is only shown to admit an {\XP} algorithm in the same regime; Chakraborty et al.~\cite{temporal_path_cover_Dailly_MFCS} explicitly ask whether \TPCshort\ also admits an {\FPT} algorithm.
A negative resolution of that question would be of particular interest, since it would highlight the complex relationship between the two problems, as there would be both cases in which \TPCshort\ is presumably easier than \TDPCshort, and cases in which the converse holds.

In this work, we settle both the aforementioned questions by Chakraborty et al.~\cite{temporal_path_cover_Dailly_MFCS}. Our first main result is that a Dilworth property guarantees that the size of the minimum T(D)PC (which, in this case, is equal to the size of the maximum antichain) is exactly equal to the \emph{Lov{\'a}sz number} of the \emph{connectivity graph} (cf. \cref{def:connectivity_graph}), thus also guaranteeing tractability, by means of the ellipsoid method. Our second main result is that \TPCshort\ is \W$[1]$-hard parameterized by the \emph{deletion distance to linear forest} (\cref{def:ddlf}) even for temporal graphs with two time-steps, which rules out the existence of an {\FPT} algorithm parameterized by the treewidth plus the number of time-steps under standard complexity assumptions, answering the respective question from~\cite{temporal_path_cover_Dailly_MFCS} in the negative. Complementing this hardness result, we also show that considering the more restrictive parameterization by the vertex cover number plus the number of time-steps renders the problem fixed-parameter tractable.
Additionally, we present various other parameterized hardness results for \TPCshort, \TDPCshort, as well as for a restricted-length variant of the static \PC\ problem. See \cref{sec:contributions} for a comprehensive overview of all our results.

\subsection{Related work}\label{sec:related_work}

\subparagraph{Path Covers in static graphs.}
The starting point for this work is Dilworth's theorem, which states that
in every partially ordered set the minimum number of chains needed to cover all elements is equal to the maximum size of an antichain~\cite{Dilworth_original}.  Equivalently, in a directed acyclic graph, the minimum number of directed paths needed to cover all vertices is equal to the maximum size of a set of pairwise unreachable vertices. Fulkerson~\cite{Fulkerson1956} gave an algorithmic proof of this duality, showing a polynomial-time algorithm for the optimization version of the two problems. 
Improving and generalizing algorithms for path covers and chain covers remains a very active direction, with recent work including
almost linear-time algorithms for \textsc{Minimum Chain Cover} and parameterized algorithms for \PC\ in directed acyclic graphs (DAGs)~\cite{Caceres_icalp_chains,Caceres_soda_chains}.

The distinction between \PC\ and \PPart\ is important; the former asks that every vertex must be covered by some path, while the latter additionally asks for vertex-disjoint paths.
Such problems have also been studied beyond DAGs. Recently, Foucaud, Majumder, M\"{o}mke and Roshany-Tabrizi~\cite{Foucaud_static_caldam} gave polynomial-time algorithms for \PC\ on trees and parameterized algorithms for graphs of bounded treewidth. Specifically, they gave an {\XP} algorithm for \PC\ on graphs of treewidth $t$ with running time $n^{t^{O(t)}}$, and an {\FPT} algorithm for the corresponding \PPart\ problem, with running time $2^{O(t)}n^{O(1)}$.

\subparagraph{Temporal(ly) (Disjoint) Path Covers.}
Chakraborty et al.~\cite{temporal_path_cover_Dailly_MFCS} introduced \TPCshort\ and \TDPCshort\ as the temporal analogues of \textsc{Path Cover} and \textsc{Path Partition} respectively. Initially, they prove that both \TPCshort\ and \TDPCshort\ remain \NP-hard even on very restricted temporal DAGs, with only two time steps, contrasting the aforementioned results for static graphs~\cite{Fulkerson1956}. 
On the positive side, Chakraborty et al.~\cite{temporal_path_cover_Dailly_MFCS} showed that \TPCshort\ is polynomial-time solvable on temporal oriented trees by presenting an algorithm
running in time $O(\pi n^2+n^3)$, where $n$ denotes the number of vertices and $\pi$ the maximum number of time labels per edge.
The behavior of \TDPCshort\ is different: Chakraborty et al.~\cite{temporal_path_cover_Dailly_MFCS} showed that \TDPCshort\ remains \NP-hard even on temporal oriented trees,
which in general do not satisfy the TD-Dilworth property (cf.~\cref{def:dilworth}). Nevertheless, they identified two tractable subclasses: both \TPCshort\ and \TDPCshort\ can be solved in $O(\pi n)$ time on temporal oriented lines and in $O(\pi n^2)$ time on temporal rooted directed trees, and both classes satisfy the TD-Dilworth property.
They also studied the parameterized complexity of the two problems on general temporal digraphs, and in particular parameterized by the
treewidth of the underlying undirected graph together with the number of time-steps.
For \TDPCshort, they gave an {\FPT} algorithm while for \TPCshort\ an {\XP} one.
Whether the latter also admits an {\FPT} algorithm was left open.

\subparagraph{Further Related Work.}
The disjointness condition in \TDPCshort\ is closely related to the concept of  \emph{temporally disjoint paths} introduced by Klobas, Mertzios, Molter, Niedermeier, and Zschoche~\cite{Mertzios_disjoint_paths}.
In the respective problem, one is given terminal pairs and asks whether they can be connected by pairwise temporally disjoint temporal paths. Kunz, Molter, and Zehavi~\cite{Kunz_tdpc_ijcai} further studied temporally disjoint paths and walks under structural restrictions on the underlying graph. In a complementary direction to the vertex covering perspective of \TPCshort\ and \TDPCshort, Deligkas, Döring, Eiben, Goldsmith, Skretas, and Tennigkeit~\cite{Deligkas_temporal_edge_cover} introduced the \textsc{Temporal Exact Edge Cover} problem, asking whether all temporal edges can be covered exactly once by a given amount of temporal paths. 

\subsection{Our Contributions}\label{sec:contributions}

The paper is structured as follows.
We start by briefly showing \NP-hardness for \mAC\ in temporal DAGs in \cref{sec:antichain} through a reduction from \mIS. This rules out the possibility of a polynomial-time algorithm for \TPCshort\ or \TDPCshort\ by solving \mAC\ when the respective Dilworth property (\cref{def:dilworth}) is promised. In contrast, notice that such an algorithm is possible for \mIS\ when it is promised to be equal to the minimum edge cover, due to the fact that \mEC\ admits a polynomial-time algorithm~\cite{GareyJohnson}.

In \cref{sec:Dilworth_Lovasz}, we prove that the Lov{\'a}sz number of the connectivity graph is sandwiched between the size of the maximum antichain and the size of the minimum \TPCshort; as such, when the Dilworth property is promised, this number is exactly equal to both values. Since the Lov{\'a}sz number of a graph can be computed with arbitrary accuracy in polynomial time by semidefinite programming~\cite{Lovasz_perfect}, and the number in this case is guaranteed to be an integer due to the aforementioned sandwich result, it follows that the shared size of the minimum \TPCshort\ and the maximum antichain can be computed exactly in polynomial time. Analogous arguments hold for \TDPCshort\ when the TD-Dilworth property is promised (see \cref{def:dilworth}). Thus, we answer in the positive a question arising from the work of Chakraborty et al.~\cite{temporal_path_cover_Dailly_MFCS}.

In \cref{sec:TPC_parameterized}, we prove that \TPCshort\ is \W$[1]$-hard parameterized by the \emph{deletion distance to linear forest} (\cref{def:ddlf}) even for temporal graphs with two time-steps, thus ruling out an {\FPT} algorithm parameterized by treewidth plus number of time-steps. We achieve this through a reduction from \RUBP, a variant of \textsc{Unary Bin Packing} which is also known to be \W$[1]$-hard by the number of bins.  This answers in the negative the respective question by Chakraborty et al.~\cite{temporal_path_cover_Dailly_MFCS} and shows a contrast between \TPCshort\ and \TDPCshort, since the latter is known to admit an {\FPT} algorithm under the same parameterization~\cite{temporal_path_cover_Dailly_MFCS}. We complement this hardness result with an {\FPT} algorithm for \TPCshort\ in case we replace treewidth with vertex cover number in the above parameterization, by reducing the problem to an ILP of bounded dimension and employing Lenstra's algorithm~\cite{Lenstra_ILP}.

In \cref{sec:para-NP-hardness}, we present a plethora of para-{\NP}-hardness results for \TPCshort\ and \TDPCshort, as well as for a restricted variant of the static \PC\ problem in which the length of each path in the cover is bounded by a given value. To this end, we modify a recent reduction from \NMTS\ to \textsc{Telephone Broadcast} by Egami et al.~\cite{mfcs/EgamiGHKLMNOV025}.
Specifically, we prove that \TPCshort, \TDPCshort, and \RLPC\ remain \NP-hard even for constant \emph{deletion distance to linear forest}, and via a slight adaptation similarly obtain \NP-hardness for constant maximum degree and pathwidth. Note that this contrasts the status of the (unrestricted) \PC\ problem, which is known to admit an \XP\ algorithm parameterized by treewidth~\cite{Foucaud_static_caldam}. Our result additionally highlights that \TPCshort\ becomes \NP-hard when the \emph{feedback vertex set number} of the underlying graph is $1$, while the problem is known to be tractable when it is $0$, i.e., for temporal oriented trees~\cite{temporal_path_cover_Dailly_MFCS}. Lastly, we prove that dropping the number of time-steps from the parameterization of our \FPT\ algorithm in \cref{sec:TPC_parameterized} is impossible, since \TPCshort\ and \TDPCshort\ both remain \NP-hard even on temporal DAGs with constant vertex cover size.

\section{Preliminaries}\label{sec:prelims}

Throughout the paper, we use the notations $[n]=\{1,\ldots,n\}$ and $[m,n]=\{m,\ldots,n\}$, for $m,n\in \mathbb{N}$ with $m\leq n$. 

\subsection{Temporal graphs}\label{subsec:prelims_temporal}

A temporal digraph is a time-indexed digraph $G=(V,E_1,\ldots,E_{t_{\max}})$, where the set of time-steps is $\{1,\ldots,t_{\max}\}$, and $E_i$ represents the edge-set that contains exactly the directed edges that are available at time-step $i$. The \emph{underlying digraph} of the temporal graph is denoted by $U_{G}=(V,E)$, where $E=\bigcup_{i=1}^{t_{\max}} E_i$.
An equivalent representation for temporal graphs can be obtained by assigning to every edge the set of time-steps at which it is active. More formally, one can use an edge-labeling function $\lambda \colon E(G) \to 2^{[t_{\max}]},$ where, for a directed edge $e\in E(G)$, the label set $\lambda(e)\subseteq [t_{\max}]$ is the set of time-steps in which $e$ is active. Under this representation, we denote the temporal digraph as $G=(U_{G},\lambda).$
In the context of this paper, all temporal graphs have unit \emph{traversal times} for all their edges, i.e., if an edge $(u,v)$ is active at time unit $t$, then it is possible to start from $u$ at time $t$ and reach~$v$ at time $t+1$.
Throughout this paper, we use the notations $n=|V|$ and $m=|E|$. Additionally, we use $\pi$ to denote the maximum number of time-steps assigned to any edge, that is, $\pi=\max_{e\in E(U_{G})}|\lambda(e)|$.

In a temporal digraph, a strict temporal directed path (or simply \emph{temporal path}) is a sequence $(v_1,v_2,t_1),(v_2,v_3,t_2),\ldots,(v_{k-1},v_k,t_{k-1})$ such that for any $i,j$ with $1 \leq i < j \leq k$, it holds that $v_i \neq v_j$, and for any $i$ with $1 \leq i \leq k-1$, it holds that $t_i < t_{i+1}$ and there is an edge $(v_i, v_{i+1})$ active at time-step $t_i$. For a temporal path  $P=(v_1,v_2,t_1),\ldots,(v_{k-1},v_k,t_{k-1}),$ we denote by $V(P)$ the set $ \bigcup_{i=1}^{k}\{v_i\}$ and by $E(P)$ the set $\bigcup_{i=1}^{k-1}\{(v_i, v_{i+1})\}.$
We say that a temporal path $P$ \emph{covers} the vertices in $V(P)$.

We say that a temporal path $P=(v_1,v_2,t_1),\ldots,(v_{k-1},v_k,t_{k-1})$ \emph{occupies} vertex $v_i$, $i\in [2,k-1]$, throughout the time interval $[t_{i-1}+1,\ t_i]$, and vertices $v_1,v_k$ at time units $t_1,t_{k-1}+1$ respectively.
Two temporal paths $P_1$ and $P_2$ are said to be \emph{temporally disjoint} (cf.~\cite{temporal_path_cover_Dailly_MFCS,Mertzios_disjoint_paths}) if they do not occupy a vertex at the same time-step. We say that two vertices $u,v \in V$ are \emph{temporally connected} if there exists a temporal path from $u$ to $v$ or one from $v$ to $u$.

\begin{definition}[Temporal Path Cover]\label{def:TPC}
    Let $G$ be a temporal digraph. A \emph{temporal path cover} of $G$ is a collection $\mathcal{C}$ of temporal paths such that every vertex of $G$ is covered by at least one path in $\mathcal{C}$, that is, $\bigcup_{P\in\mathcal{C}}V(P) = V$.
\end{definition}

\begin{definition}[Temporally Disjoint Path Cover]\label{def:TDPC}
    Let $G$ be a temporal digraph. A \emph{temporally disjoint path cover} of $G$ is a temporal path cover $\mathcal{C}$ such that any two distinct paths in $\mathcal{C}$ are temporally disjoint.
\end{definition}

\begin{definition}[Antichain]\label{def:antichain}
    Let $G$ be a temporal digraph. A \emph{temporal antichain} of $G$ (or just \emph{antichain} for simplicity) is a set of vertices in $G$ that are pairwise not temporally connected.
\end{definition}

The definitions of the problems \TPC\ (\TPCshort), \TDPC\ (\TDPCshort) and \mAC\ follow immediately from the above definitions.

\begin{definition}[Dilworth property]\label{def:dilworth}
    A class $\mathcal{C}$ of temporal graphs has the Dilworth property (resp.~TD-Dilworth property) if for each graph in $\mathcal{C}$ the cardinality of a minimum temporal path cover (resp.~temporally disjoint path cover) is equal to the size of a maximum antichains.
\end{definition}

We now define the connectivity graph of a temporal graph, which can be seen as the undirected temporal equivalent of the transitive closure graph of a static graph.

\begin{definition}[Connectivity Graph~\cite{temporal_path_cover_Dailly_MFCS}]\label{def:connectivity_graph}
    Let $G=(V,E_1,\ldots,E_{t_{\max}})$ be a temporal digraph. The \emph{connectivity graph} of $G$, denoted by $C_{G}$, is the static undirected graph with vertex set $V(C_{G})=V,$ and edge set $ E(C_{G}) = \{(u,v) \mid u\neq v \text{ and } u, v \text{ are temporally connected in } G\}.$
\end{definition}

\begin{figure}[ht]
     \centering
     \begin{subfigure}[b]{0.48\textwidth}
         \centering
         \includegraphics[width=\linewidth, page=1]{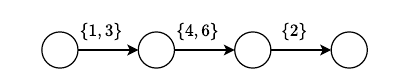}
         \caption{Temporal graph.}
         \label{fig:connectivity_transitivity_1}
     \end{subfigure}\hfill
     \begin{subfigure}[b]{0.48\textwidth}
         \centering
         \raisebox{-0.9cm}{
         \includegraphics[width=\linewidth, page=2]{figures/temporal.drawio.pdf}
         }
         \caption{Connectivity graph.}
         \label{fig:connectivity_transitivity_2}
     \end{subfigure}
     \caption{An example in which the connectivity graph of a temporal graph is not transitively closed.}
     \label{fig:connectivity_transitivity}
\end{figure}

We remark that the connectivity graph of a temporal graph $G$ can be constructed in time $\bigO(\pi n^3)$, or $\bigO(\pi n^2)$ if $G$ is an oriented tree~\cite{temporal_path_cover_Dailly_MFCS}.
Note that, in contrast to static graphs, the connectivity graph of a temporal graph is not necessarily transitively closed (see \cref{fig:connectivity_transitivity} for an example). In fact, this is one of the reasons why Fulkerson's technique~\cite{Fulkerson1956} cannot be applied to obtain a polynomial-time algorithm for \TPCshort\ on temporal DAGs.\footnote{Recall that \TPCshort\ is NP-hard even on temporal DAGs~\cite{temporal_path_cover_Dailly_MFCS}.}

\subsection{The Lovász number of a graph}\label{subsec:prelims_lovasz_number}

The Lovász number of a graph, also known as the Lovász theta function, is a real number functioning as an upper bound on the Shannon capacity of the graph. It was introduced by Lovász in 1979~\cite{Lovasz_number_original}. It has multiple equivalent descriptions, including one through semidefinite programming~\cite{Galli_lovasz}. The Lovász number can be useful for certain algorithms, since it is computable to arbitrary precision in polynomial time and is sandwiched between hard to compute graph parameters, such as the \emph{clique number} and the \emph{chromatic number}~\cite{Lovasz_perfect} (cf. \cref{thrm:lovasz}). It also has geometric interpretations through orthogonal representations and positive semidefinite matrices, which shows why it is closely tied to convex optimization~\cite{Goemans_1997}.

\subsection{Parameterizations}\label{subsec:prelims_parameterized}

Throughout the paper, we assume familiarity with basic notions of \emph{parameterized complexity} (cf.~\cite{books/DowneyF13}). For a static\footnote{The parameters defined here are applied to the underlying graph $U_G$ in the context of this paper. Hence, their classic definitions for static graphs suffice.} graph $G$, we denote its \emph{treewidth} with $\tw(G)$; its \emph{pathwidth} with $\pw(G)$; its \emph{feedback vertex set number} (i.e., the minimum number of vertices that have to be deleted in order for the graph to become acyclic) with $\fvs(G)$; its \emph{maximum degree} with $d_{\max}(G)$.
We will often use the following parameter.

\begin{definition}[Deletion distance to linear forest]\label{def:ddlf}
    For a static graph $G$, we define $\ddlf(G)$ as the minimum number of vertices that have to be deleted from $G$ to obtain a \emph{linear forest}, i.e., a union of disjoint paths.
\end{definition}

Note that $\ddlf(G)$ is, in fact, a very natural parameter for path cover problems. For graphs with $\ddlf(G)=0$ (i.e., linear forests), \PC\ is trivial and its cardinality is equal to the number of connected components. Intuitively, $\ddlf(G)$ measures the similarity of $G$ to a linear forest, hence it is natural to expect that path cover problems may be tractable when $\ddlf(G)$ is small.

We remark that, for any graph $G$, it holds that $\fvs(G) \le \ddlf(G)$ and $\pw(G) \le \ddlf(G)+1$,
since a linear forest is a forest and has pathwidth $1$;
therefore, the hardness results of \cref{thrm:restricted_length,thrm:TPC_tw_tau_hardness,thrm:TPC_pathwidth} imply analogous hardness for the more standard parameters
pathwidth and feedback vertex set number, thus also for treewidth.

\section{Warmup: Hardness of \mAC}\label{sec:antichain}
We start by observing that the \TPCshort\ and \mAC\ problems are, in a sense, \emph{dual}. Indeed, every minimum path cover is at least as large as the maximum antichain, as every temporal path can cover at most one element of any antichain. This duality is reminiscent of \mIS\ and \mEC.
Since, in static graphs, \mEC\ is in \P~\cite{GareyJohnson}, it is immediate that \mIS\ is also polynomial-time solvable in graphs where the cardinality of the maximum independent set equals that of the minimum edge cover. However, as we show, such a simple argument cannot be applied to \TPCshort\ and \mAC. \TPCshort\ is already known to be \NP-hard in DAGs~\cite{temporal_path_cover_Dailly_MFCS}; we complement this with a simple reduction that proves \NP-hardness for \mAC.

\begin{theorem}\label{thrm:hardness_max_antichain}
    \mAC\ is \NP-hard even when $U_G$ is a DAG and $t_{\max} = 1$.
\end{theorem}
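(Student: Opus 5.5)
We reduce from \mIS, following the hint given in the paper. Given a static undirected graph $H=(V_H,E_H)$ and an integer $k$, we build a temporal digraph $G$ with the same vertex set, $V(G)=V_H$. Fix an arbitrary linear order $v_1,\ldots,v_n$ of $V_H$. Orient every edge $\{v_i,v_j\}\in E_H$ with $i<j$ as the arc $(v_i,v_j)$, and make every arc active at the single time-step $1$, so $t_{\max}=1$. The underlying digraph $U_G$ is then acyclic, since every arc goes forward in the order. The construction is clearly polynomial.

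The key step is to show that the connectivity graph $C_G$ equals $H$. Since $t_{\max}=1$ and temporal paths are strict, every temporal path uses time-steps $t_1<t_2<\cdots$ drawn from $\{1\}$. Hence every temporal path has at most one edge. Two distinct vertices $u,v$ are therefore temporally connected if and only if $(u,v)$ or $(v,u)$ is an arc of $G$. By construction this holds exactly when $\{u,v\}\in E_H$. So $C_G=H$.

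It follows that a set $S\subseteq V$ is a temporal antichain of $G$ if and only if it is independent in $H$. In particular, $H$ has an independent set of size at least $k$ if and only if $G$ has an antichain of size at least $k$. Since \mIS\ is \NP-hard, so is \mAC, even when $U_G$ is a DAG and $t_{\max}=1$.

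The only point needing care, which I expect to be the main (though mild) obstacle, is the strictness of temporal paths. It is exactly what prevents transitive reachability along a directed path $v_i\to v_j\to v_\ell$, so that non-adjacent vertices of $H$ stay temporally unreachable. With non-strict paths one would instead need distinct labels, for instance a bipartite-style construction. Under the paper's definition, where $t_i<t_{i+1}$ is required, the single-time-step construction works directly.
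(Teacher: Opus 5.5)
Your proposal is correct and follows the paper's proof essentially verbatim. You use the same reduction from Maximum Independent Set: orient each edge forward along a fixed vertex ordering to obtain a DAG, and give every arc the single label $1$. Strictness then limits temporal paths to at most one edge, so antichains of $G$ are exactly the independent sets of $H$.
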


\begin{proof}
We reduce from \mIS. 
Let $H=(V,E)$ be a (static) undirected graph, and fix an arbitrary ordering on its vertices $v_1,\dots,v_n$.
We construct a temporal digraph $G=(U_G,\lambda)$ on the vertex set $V$ as follows: for every edge $\{v_i,v_j\}\in E$ with $i<j$, add the directed edge $(v_i,v_j)$ to $U_G$ and set $\lambda\bigl((v_i,v_j)\bigr)=\{1\}$.

Since every edge of $G$ is directed according to the ordering of the vertices, $U_G$ is a DAG.
Since all edges are active only at time $1$, every temporal path in $G$ has length at most $1$. Therefore, two vertices are temporally connected in $G$ if and only if they are adjacent in $H$. It follows that a set $S \subseteq V$ is an antichain in $G$ if and only if $S$ is an independent set in $H$. Thus, an independent set of size $k$ exists in $H$ if and only if an antichain of size $k$ exists in $G$, for all $k\in \mathbb{N}$.
\end{proof}

\begin{remark}
    Note that the above does not imply that \mAC\ is \NP-hard in static graphs, since traversing a temporal edge requires one time unit in our model.
\end{remark}

\section{Computing the value of T(D)PC under the Dilworth Property}\label{sec:Dilworth_Lovasz}

For a temporal digraph $G$, we denote by $\mathrm{AC}(G)$  the maximum cardinality of an antichain of~$G$, by $\mathrm{TPC}(G)$ the minimum cardinality of a temporal path cover of $G$, and by $\mathrm{TDPC}(G)$ the minimum cardinality of a temporally disjoint path cover of $G$.

For a static graph $H$, we denote by $\alpha(H)$, $\omega(H)$, $\chi(H)$, $cc(H)$ and $\vartheta(H)$ its maximum independence number, 
clique number,%
\footnote{The \emph{clique number} of a graph is defined as the number of vertices in its largest clique.}
chromatic number, minimum clique cover (i.e., the cardinality of the minimum partition into \emph{disjoint} cliques) and \emph{Lov{\'a}sz number}~\cite{Knuth_sandwich,Lovasz_number_original} respectively. 

\begin{theorem}[Lov{\'a}sz Sandwich Theorem \cite{Knuth_sandwich,Lovasz_sandwich}]\label{thrm:lovasz}
For every graph $G$ it holds that \[\omega(G)\leq \vartheta(\overline G)\leq\chi(G),\] where $\overline G$ denotes the complement of the graph $G$.
\end{theorem}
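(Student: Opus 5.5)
The statement is the classical Lovász Sandwich Theorem, cited from~\cite{Knuth_sandwich,Lovasz_sandwich}, so I would recall its standard proof rather than develop anything new. I will work with two equivalent characterisations of $\vartheta$. The first is through orthonormal representations: $\vartheta(H)=\min_{c,(u_i)}\max_i 1/\langle c,u_i\rangle^2$, where $(u_i)$ ranges over unit vectors with $u_i\perp u_j$ whenever $ij\notin E(H)$, $i\neq j$, and $c$ is a unit vector. The second is the dual semidefinite program: $\vartheta(H)=\max\{\langle J,B\rangle : B\succeq 0,\ \mathrm{tr}(B)=1,\ B_{ij}=0 \text{ for } ij\in E(H)\}$. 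Their equality is a standard SDP duality fact, and I will assume it. I apply both with $H=\overline G$.

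\emph{Lower bound} $\omega(G)\le\vartheta(\overline G)$. Let $K$ be a maximum clique of $G$; it is an independent set in $\overline G$. Take $B=\frac1{|K|}\mathbf 1_K\mathbf 1_K^{\top}$. This matrix is positive semidefinite and has trace $1$. Its support is $K\times K$, and $K$ contains no edges of $\overline G$, so $B$ is feasible for the maximisation program. Its value is $\langle J,B\rangle=|K|^2/|K|=|K|$, which gives the bound.

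\emph{Upper bound} $\vartheta(\overline G)\le\chi(G)$. Fix a proper colouring of $G$ with $k=\chi(G)$ classes. Each colour class is an independent set in $G$, hence a clique in $\overline G$. I will build an orthonormal representation of $\overline G$. Choose orthonormal vectors $e_1,\dots,e_k$ and a unit vector $c$ with $\langle c,e_j\rangle=1/\sqrt k$ for all $j$, for instance $c=\frac1{\sqrt k}\sum_j e_j$. Assign $u_i=e_{\mathrm{col}(i)}$. If $ij\notin E(\overline G)$ with $i\neq j$, then $ij\in E(G)$, so $i$ and $j$ receive different colours and $u_i\perp u_j$. Every $i$ satisfies $1/\langle c,u_i\rangle^2=k$. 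Therefore $\vartheta(\overline G)\le k=\chi(G)$.

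The only real obstacle is the equivalence between the two formulations of $\vartheta$. If I want to avoid it, I can prove the lower bound directly in the orthonormal-representation form. For an independent set $S$ of $\overline G$ the vectors $u_i$, $i\in S$, are pairwise orthogonal. Hence $1\ge\sum_{i\in S}\langle c,u_i\rangle^2\ge |S|/\vartheta$, which gives $|S|\le\vartheta$. Using this, both inequalities follow from the single min-characterisation.
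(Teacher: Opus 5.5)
Your proof is correct. The paper does not prove this statement; it cites it from Knuth and Lov{\'a}sz, and your argument is the standard one from those sources: colour classes of $G$ become orthonormal basis vectors for the upper bound, and the lower bound comes from either the rank-one certificate $\frac{1}{|K|}\mathbf{1}_K\mathbf{1}_K^{\top}$ or Bessel's inequality over a clique of $G$. Your closing Bessel-inequality variant is the more self-contained choice. It avoids the nontrivial equality of the two formulations of $\vartheta$, and since $|S|\le\max_i 1/\langle c,u_i\rangle^2$ holds for every feasible representation, it does not even need the minimum to be attained.
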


The next observation follows directly from the definition of the connectivity graph~$C_G$.  

\begin{observation}\label{obs:clique}
    Let $P$ be a temporal path in $G$. The vertices in $V(P)$ form a clique in~$C_G$.
\end{observation}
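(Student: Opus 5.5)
The statement to prove is \cref{obs:clique}: for a temporal path $P$ in $G$, the set $V(P)$ is a clique in $C_G$. The plan is to show that any two distinct vertices of $P$ are temporally connected in $G$, by exhibiting a temporal path between them. The natural candidate is the subpath of $P$ that joins them.

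Write $P=(v_1,v_2,t_1),\ldots,(v_{k-1},v_k,t_{k-1})$ and take two vertices $v_i,v_j\in V(P)$ with $i<j$. If $k=1$, then $V(P)$ is a single vertex, which is trivially a clique. Otherwise, consider the contiguous subsequence
\[
P_{i,j}=(v_i,v_{i+1},t_i),\ldots,(v_{j-1},v_j,t_{j-1}).
\]
We check that $P_{i,j}$ meets the definition of a strict temporal path:
\begin{itemize}
\item its vertices $v_i,\ldots,v_j$ are pairwise distinct, because all vertices of $P$ are;
\item its time labels are strictly increasing, since $t_i<t_{i+1}<\cdots<t_{j-1}$ is a contiguous part of the increasing sequence $t_1<\cdots<t_{k-1}$;
\item each edge $(v_\ell,v_{\ell+1})$ is active at time $t_\ell$, exactly as in $P$.
\end{itemize}

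So $P_{i,j}$ is a temporal path from $v_i$ to $v_j$, and hence $v_i$ and $v_j$ are temporally connected. Since $v_i\neq v_j$, \cref{def:connectivity_graph} puts the edge $(v_i,v_j)$ in $E(C_G)$. This holds for every pair, so $V(P)$ is a clique in $C_G$.

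No step is a real obstacle. The only point that needs care is confirming that restricting $P$ to a contiguous piece keeps both strictness of the time labels and distinctness of the vertices. Both properties are inherited directly from $P$.
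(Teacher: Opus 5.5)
Your proof is correct and matches the paper, which states this observation without proof as following directly from the definition of $C_G$. You spell out the implicit step: any contiguous subpath of a temporal path is itself a temporal path. That gives a temporal path between every pair of distinct vertices of $P$, so each such pair is an edge of $C_G$.
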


\begin{lemma}\label{lem:lovasz_inequality_chain}
Let $G$ be a temporal digraph and let $C_G$ be its connectivity graph. Then, \[\mathrm{AC}(G) = \alpha(C_G) \leq \vartheta(C_G) \leq cc(C_G) \leq \mathrm{TPC}(G) \leq \mathrm{TDPC}(G).\]
\end{lemma}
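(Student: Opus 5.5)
The plan is to prove the chain link by link, from left to right.

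\emph{Equality $\mathrm{AC}(G)=\alpha(C_G)$.} By \cref{def:connectivity_graph}, two distinct vertices are adjacent in $C_G$ exactly when they are temporally connected in $G$. So a set $S\subseteq V$ is an antichain of $G$ (\cref{def:antichain}) if and only if $S$ is independent in $C_G$. Taking maximum cardinalities on both sides gives the equality.

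\emph{Inequalities $\alpha(C_G)\le\vartheta(C_G)\le cc(C_G)$.} I would apply \cref{thrm:lovasz} to the complement $\overline{C_G}$. Since $\overline{\overline{C_G}}=C_G$, this yields $\omega(\overline{C_G})\le\vartheta(C_G)\le\chi(\overline{C_G})$. Two standard identities then finish this step. Cliques of $\overline{C_G}$ are exactly the independent sets of $C_G$, so $\omega(\overline{C_G})=\alpha(C_G)$. A proper colouring of $\overline{C_G}$ is exactly a partition of $V$ into sets that are independent in $\overline{C_G}$, i.e.\ cliques in $C_G$, so $\chi(\overline{C_G})=cc(C_G)$.

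\emph{Inequality $cc(C_G)\le\mathrm{TPC}(G)$.} This is the only step needing a small argument, because a path cover may overlap while $cc$ counts a partition into \emph{disjoint} cliques. Let $\mathcal{C}=\{P_1,\dots,P_k\}$ be a minimum temporal path cover. By \cref{obs:clique}, each $V(P_i)$ is a clique in $C_G$. I would disjointify greedily by setting $Q_i=V(P_i)\setminus\bigcup_{j<i}V(P_j)$. Each $Q_i$ is a subset of a clique, hence itself a clique, since cliques are closed under taking subsets. The sets $Q_i$ are pairwise disjoint, and their union is $\bigcup_i V(P_i)=V$ because $\mathcal{C}$ covers $V$. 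Discarding the empty $Q_i$ leaves a partition of $V$ into at most $k$ cliques, so $cc(C_G)\le k=\mathrm{TPC}(G)$.

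\emph{Inequality $\mathrm{TPC}(G)\le\mathrm{TDPC}(G)$.} By \cref{def:TDPC}, every temporally disjoint path cover is in particular a temporal path cover, so the minimum over the larger family is no bigger.

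None of these steps is a real obstacle. The only point requiring care is the disjointification in the $cc\le\mathrm{TPC}$ step, where one must note that subsets of cliques remain cliques.
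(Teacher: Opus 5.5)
Your proposal is correct and follows the paper's proof link by link. The paper likewise uses the definition of $C_G$ for the equality and applies the sandwich theorem to $\overline{C_G}$ via $\omega(\overline{C_G})=\alpha(C_G)$ and $\chi(\overline{C_G})=cc(C_G)$. It then uses \cref{obs:clique} for $cc(C_G)\le\mathrm{TPC}(G)$, and the fact that every temporally disjoint path cover is a temporal path cover for the last inequality. Your greedy disjointification spells out a detail the paper leaves implicit: overlapping path-cliques can be trimmed into a partition of at most as many cliques.
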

\begin{proof}
By definition of $C_G$, two vertices are adjacent in $C_G$ if and only if (at least) one temporally reaches the other in $G$. It follows that a set $A\subseteq V$ is an antichain in~$G$ if and only if it is an independent set in $C_G$ and, thus, $\mathrm{AC}(G) = \alpha(C_G)$. By the definitions of the problems it follows that for a graph $H$ and its complement $\overline H$, $\chi(\overline H)=cc(H)$ and $\omega(\overline H)=\alpha(H)$; in conjunction with \cref{thrm:lovasz} and the fact that $\overline{\overline{H}} = H$, we obtain $\alpha(C_G) \leq \vartheta(C_G) \leq cc(C_G)$. By \cref{obs:clique} it follows that $cc(C_G) \leq \mathrm{TPC}(G)$. Finally, every \TDPCshort\ is also a \TPCshort, by definition. Hence, $\mathrm{TPC}(G)\leq \mathrm{TDPC}(G)$.
\end{proof}

\begin{theorem}\label{thrm:dilworth-polytime}
Let $G$ be a temporal digraph. If $G$ satisfies the Dilworth property, i.e., $\mathrm{AC}(G) = \mathrm{TPC}(G)$, then the common value $\mathrm{AC}(G) = \mathrm{TPC}(G)$ can be computed in polynomial time.
\end{theorem}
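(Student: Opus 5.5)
The plan is to combine \cref{lem:lovasz_inequality_chain} with the polynomial-time approximability of the Lov{\'a}sz number. First, I construct the connectivity graph $C_G$. As remarked in the preliminaries, this takes $\bigO(\pi n^3)$ time. Concretely, for every vertex $u$ I compute the earliest arrival time at each other vertex via a temporal BFS over the time-steps, and I add the edge $\{u,v\}$ whenever $v$ is reachable from $u$ or $u$ is reachable from $v$. Under the promise $\mathrm{AC}(G)=\mathrm{TPC}(G)$, every inequality in the chain of \cref{lem:lovasz_inequality_chain} collapses to an equality. In particular, $\vartheta(C_G)=\mathrm{AC}(G)=\mathrm{TPC}(G)$, so $\vartheta(C_G)$ is an integer $k$ with $0\le k\le n$.

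Second, I compute $\vartheta(C_G)$ to within additive error $\varepsilon<1/2$, for instance $\varepsilon=1/3$. This uses the semidefinite programming formulation and the ellipsoid method, as in Gr\"otschel, Lov\'asz and Schrijver~\cite{Lovasz_perfect}. Their result gives an approximation in time polynomial in $n$ and $\log(1/\varepsilon)$, so a constant $\varepsilon$ gives polynomial time. Call the output $\tilde\vartheta$. I then return the integer nearest to $\tilde\vartheta$. Because $|\tilde\vartheta-k|<1/2$ and $k$ is an integer, that nearest integer is exactly $k$, and by the collapsed chain this is the common value $\mathrm{AC}(G)=\mathrm{TPC}(G)$.

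The main obstacle is justifying that the ellipsoid method really approximates $\vartheta$ to additive precision in polynomial time. The feasible region of the SDP must contain a ball and be contained in a ball, both of polynomially bounded encoding size. I would not reprove this and would cite the standard Gr\"otschel--Lov\'asz--Schrijver result that $\vartheta(H)$ is computable to any fixed precision in polynomial time. The remaining requirement is the integrality guaranteed by the sandwich, which the promise supplies.

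Finally, I note that the algorithm never needs to verify the promise. It always outputs the integer nearest to an approximation of $\vartheta(C_G)$, and that output equals the desired value whenever the Dilworth property holds. The same argument applies verbatim to \TDPCshort\ under the TD-Dilworth property, since then the chain also collapses up to $\mathrm{TDPC}(G)$.
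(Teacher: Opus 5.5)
Your proposal is correct and follows essentially the same route as the paper. You build $C_G$ in polynomial time, use the collapse of the chain in \cref{lem:lovasz_inequality_chain} under the promise to conclude that $\vartheta(C_G)$ is an integer equal to $\mathrm{AC}(G)=\mathrm{TPC}(G)$, approximate it with the Gr\"otschel--Lov\'asz--Schrijver algorithm to additive error below $1/2$ (the paper uses $\varepsilon=1/4$), and round, which is exactly the paper's argument, including the transfer to \TDPCshort{}.
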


\begin{proof}
First, recall that $C_G$ can be constructed in polynomial time (see \cref{sec:prelims}).
By \cref{lem:lovasz_inequality_chain} we have \[\mathrm{AC}(G) = \alpha(C_G) \leq \vartheta(C_G) \leq cc(C_G) \leq \mathrm{TPC}(G).\] 
Since $\mathrm{AC}(G) = \mathrm{TPC}(G)$, the leftmost and rightmost quantities in this equation are equal and all inequalities collapse. Hence, $\mathrm{AC}(G)=\vartheta(C_G)=\mathrm{TPC}(G)$. 
From this, it follows that $\vartheta(C_G)$ is an integer. 
The Lov{\'a}sz number $\vartheta(C_G)$ can then be computed with the ellipsoid method, using the algorithm $\operatorname{THETA}$ of~\cite{Lovasz_perfect}, which, given a graph $H$ and an error parameter $\varepsilon > 0$, outputs a rational number $T$ satisfying $|T-\vartheta(H)|<\varepsilon$ in polynomial time.\footnote{The function also receives a weight vector as input, which is a list of ones in our case.}  
Applying this algorithm to $C_G$ with $\varepsilon = 1/4$ and rounding to the closest integer returns the exact value of $\vartheta(C_G)$. 
\end{proof}

The above result also holds for \TDPCshort. The proof is essentially identical to that of \cref{thrm:dilworth-polytime} and is thus omitted.

\begin{theorem}\label{thrm:dilworth-polytime_disjoint}
Let $G$ be a temporal digraph. If $G$ satisfies the TD-Dilworth property, i.e., $\mathrm{AC}(G)=\mathrm{TDPC}(G)$, then the common value $\mathrm{AC}(G) = \mathrm{TDPC}(G)$ can be computed in polynomial time.
\end{theorem}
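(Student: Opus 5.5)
The argument is the same sandwich used for \cref{thrm:dilworth-polytime}, with $\mathrm{TDPC}(G)$ as the upper end instead of $\mathrm{TPC}(G)$. First, build the connectivity graph $C_G$ in polynomial time, e.g.\ in $\bigO(\pi n^3)$ as noted in \cref{sec:prelims}. Then apply the full chain of \cref{lem:lovasz_inequality_chain}:
\[
\mathrm{AC}(G) = \alpha(C_G) \leq \vartheta(C_G) \leq cc(C_G) \leq \mathrm{TPC}(G) \leq \mathrm{TDPC}(G).
\]
The last inequality holds because every temporally disjoint path cover is in particular a temporal path cover.

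Next, use the TD-Dilworth promise $\mathrm{AC}(G)=\mathrm{TDPC}(G)$. The two ends of the chain are then equal, so every intermediate inequality is an equality. In particular $\vartheta(C_G)=\mathrm{AC}(G)=\mathrm{TDPC}(G)$, and hence $\vartheta(C_G)$ is a nonnegative integer. As a by-product, $\mathrm{TPC}(G)$ and $cc(C_G)$ also take this common value, so $G$ satisfies the ordinary Dilworth property as well.

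Finally, compute $\vartheta(C_G)$ with the algorithm $\operatorname{THETA}$ of~\cite{Lovasz_perfect}, using all-ones weights and error $\varepsilon=1/4$. This runs in polynomial time and returns a rational $T$ with $|T-\vartheta(C_G)|<1/4$. Since $\vartheta(C_G)$ is an integer, rounding $T$ to the nearest integer recovers it exactly.

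No step is a real obstacle. The only point needing care is that the sandwich must extend all the way to $\mathrm{TDPC}(G)$, and this is already part of \cref{lem:lovasz_inequality_chain}. Alternatively, one can bypass the lemma's last step: from $\mathrm{AC}(G)\le\mathrm{TPC}(G)\le\mathrm{TDPC}(G)=\mathrm{AC}(G)$, the Dilworth property holds, and \cref{thrm:dilworth-polytime} applies verbatim.
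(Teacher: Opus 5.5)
Your proposal is correct and is essentially the paper's own (omitted) argument. The chain of \cref{lem:lovasz_inequality_chain}, which already ends at $\mathrm{TDPC}(G)$, collapses under the promise, so $\vartheta(C_G)$ is an integer and can be recovered exactly by running $\operatorname{THETA}$ with $\varepsilon=1/4$ and rounding. Your shortcut is also valid: $\mathrm{AC}(G)\le\mathrm{TPC}(G)\le\mathrm{TDPC}(G)=\mathrm{AC}(G)$ gives the ordinary Dilworth property, so \cref{thrm:dilworth-polytime} applies verbatim.
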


For the sake of completeness, we briefly show that the statement in \cref{thrm:dilworth-polytime} is not an equivalence. Similarly, the same also holds for \cref{thrm:dilworth-polytime_disjoint}.

\begin{proposition}
The converse of \cref{thrm:dilworth-polytime} does not hold, i.e., there exists a class of temporal graphs for which \TPCshort\ can be solved in polynomial time, while violating the Dilworth property.
\end{proposition}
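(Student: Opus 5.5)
The plan is to exhibit an explicit infinite family of temporal graphs on which \TPCshort\ is trivially solvable, yet $\mathrm{AC}(G) < \mathrm{TPC}(G)$. The simplest source of a gap is an odd cycle in the connectivity graph: for $C_5$ we have $\alpha(C_5)=2$, $\vartheta(C_5)=\sqrt5$, and $cc(C_5)=3$. So I would aim for temporal graphs whose temporal paths have length at most one, making $C_G$ equal to the underlying undirected graph, and choose that underlying graph to be an odd cycle.

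Concretely, for each $k\ge 2$ let $G_k$ be a temporal digraph on vertices $v_1,\dots,v_{2k+1}$. It has directed edges $(v_i,v_{i+1})$ for $i\in[2k]$ and $(v_1,v_{2k+1})$, and every edge carries the label $\{1\}$, as in the proof of \cref{thrm:hardness_max_antichain}. Since all edges are active only at time $1$, every temporal path has at most one edge. Hence two vertices are temporally connected if and only if they are adjacent, and $C_{G_k}$ is the cycle $C_{2k+1}$. The maximum antichain therefore has size $\alpha(C_{2k+1})=k$. On the other hand, each temporal path covers at most two vertices, so any temporal path cover has at least $\lceil (2k+1)/2\rceil = k+1$ paths. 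Taking the edges $(v_1,v_2),(v_3,v_4),\dots,(v_{2k-1},v_{2k})$ together with the trivial path $\{v_{2k+1}\}$ gives a cover of size $k+1$. Thus $\mathrm{TPC}(G_k)=k+1>k=\mathrm{AC}(G_k)$, and the Dilworth property fails.

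Finally, I must argue that \TPCshort\ is polynomial-time solvable on the class. The cleanest route is to take the broader class of all temporal graphs with $t_{\max}=1$. There every temporal path is a single vertex or a single edge, so a minimum temporal path cover is exactly a minimum edge cover of the underlying undirected graph, where isolated vertices are covered by trivial paths. This is computable in polynomial time via maximum matching~\cite{GareyJohnson}: the value is $n-\nu$, where $\nu$ is the matching number. This class contains every $G_k$, so it violates the Dilworth property while remaining tractable.

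The only delicate point is the definition of ``class.'' If one insists on a class whose membership itself is recognizable, then $t_{\max}=1$ is trivially checkable, so nothing more is needed. I expect no real obstacle beyond the bookkeeping that paths of length $0$ are allowed in a cover, which I would state explicitly.
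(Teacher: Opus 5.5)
Your proposal is correct and follows essentially the same route as the paper: restrict to temporal graphs whose edges are all active only at time~$1$, observe that a minimum temporal path cover is then a minimum edge cover of the underlying graph (polynomial via matching), and exhibit a family with $\mathrm{AC}<\mathrm{TPC}$. The only difference is the witness family --- you use odd cycles $C_{2k+1}$, while the paper uses the triangle $(a,b),(a,c),(b,c)$ and copies of it glued at $a$ --- and your restriction $k\ge 2$ is unnecessary, since $G_1$ is exactly the paper's triangle.
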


\begin{proof}
Let $C_1$ be the class of temporal digraphs in which every edge is active
only at time~$1$. Observe that every temporal path in a graph
of $C_1$ consists of a single edge. Thus, the \TPCshort\ problem on $C_1$ coincides with the \mEC\ problem in the respective underlying graph, which can be computed in polynomial time~\cite{GareyJohnson}. However, $C_1$ does not satisfy the Dilworth property. Consider the temporal DAG $D$ with $ V=\{a,b,c\}$ and $E=\{(a,b),(a,c),(b,c)\}$, all active only at time $1$. Observe that $\mathrm{AC}(D)=1<2=\mathrm{TPC}(D)$. Note that this example can be rescaled to an arbitrary size by adding copies of this triangle, all sharing only vertex $a$.
\end{proof}

\begin{remark}
    \cref{thrm:dilworth-polytime,thrm:dilworth-polytime_disjoint} only state that the cardinality of the minimum T(D)PC and maximum antichain can be computed in polynomial time. It remains open whether a respective path cover and antichain can be retrieved in polynomial time.
\end{remark}


\section{Parameterized complexity of \TPCshort}\label{sec:TPC_parameterized}

\subsection{W[1]-Hardness}

As already mentioned, \TPCshort\ admits an \XP\ algorithm parameterized by
$\tw(U_G)+t_{\max}$~\cite{temporal_path_cover_Dailly_MFCS}.
We show that an \FPT\ algorithm for this parameterization is unlikely under standard complexity assumptions, settling an open question from~\cite{temporal_path_cover_Dailly_MFCS}.
In fact, more strongly, we prove that the problem with $t_{\max}=2$ already remains \W$[1]$-hard when parameterized by $\ddlf(U_G)$.

We reduce from \RUBP. Each item is eligible for exactly two bins, and its gadget
has two possible minimum-cover configurations, one for each choice. A single
selector vertex, fed by $n$ leaves, forces exactly one of these configurations
for each of the $n$ items. Choosing bin~$c$ for an item of size $s$ leaves
exactly $s$ assignment vertices associated with $c$ to be covered by the bin
gadgets. Each bin gadget provides $B$ paths, each of which can cover one such
vertex. Hence a cover of the target size exists precisely when every bin
receives items of total size $B$. Lastly, deleting the selector and the bin
hubs leaves only disjoint paths.

\begin{theorem}\label{thrm:TPC_tw_tau_hardness}
    \TPCshort\ is \W$[1]$-hard parameterized by $\ddlf(U_G)$, even on temporal DAGs with $t_{\max}=2$ in which every
    edge is active at exactly one time-step.
\end{theorem}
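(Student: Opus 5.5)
We will give a parameterized reduction from \RUBP, which is \W$[1]$-hard parameterized by the number of bins $k$. An instance consists of $k$ bins of capacity $B$ and $n$ items in unary, each eligible for exactly two bins, with total size $kB$. We build a temporal DAG $G$ with $t_{\max}=2$ and every edge carrying a single label. The construction has three parts: a selector vertex $z$, $k$ bin hubs $h_1,\dots,h_k$, and one path-like gadget per item. Together with a target value $\ell$, it will satisfy $\mathrm{TPC}(G)\le \ell$ if and only if the items can be packed. Since $\{z,h_1,\dots,h_k\}$ will be a deletion set to a linear forest, we get $\ddlf(U_G)\le k+1$, which gives the required parameterized reduction.

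\textbf{Construction.} The item gadget for an item $i$ of size $s_i$ eligible for bins $a,b$ is a directed path. It has two blocks of $s_i$ assignment vertices, one block associated with $a$ and one with $b$, joined through a central vertex $c_i$. Labels alternate between $1$ and $2$ so that each temporal path uses at most two consecutive edges, of the form (label $1$, label $2$). We attach $n$ leaves $\ell_1,\dots,\ell_n$ to the selector, with edges $\ell_i\to z$ at time $1$ and $z\to c_i$ at time $2$. The point is that each leaf must start its own path. Such a path can pass through $z$, but it can then extend into exactly one item gadget, and only on one side, toward $a$ or toward $b$.

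The labels inside the gadget are chosen so that, with the cheapest local cover, the side reached through $z$ is covered for free. The other side then leaves exactly its $s_i$ assignment vertices uncovered by the gadget's own forced paths. Bin hub $h_c$ has $B$ incoming leaves, each with an edge at time $1$. From $h_c$ there are edges at time $2$ to every assignment vertex associated with $c$. Each of the $B$ leaf paths through $h_c$ can therefore absorb exactly one such vertex. We set $\ell$ to the number of forced path starts: $n$ selector leaves, $kB$ bin leaves, plus a fixed number of internal paths per gadget, computed locally.

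\textbf{Correctness.} For the forward direction, a packing gives an explicit cover. Each item routes its selector path to the side of its chosen bin's complement, leaving its $s_i$ assignment vertices for its chosen bin $c$ uncovered. These are absorbed by the bin paths of $h_c$, and because every bin receives total size exactly $B$, all $kB$ bin paths are used with none left over.

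For the backward direction, we first argue that every leaf is an endpoint of its own path, since leaves have in-degree $0$ and distinct leaves are pairwise unreachable. This already uses up the budget $n+kB$. Because paths have at most two edges, each leaf path covers at most two further vertices. A counting argument then shows that the remaining budget forces each gadget into one of its two canonical configurations. It also forces every bin path to cover exactly one assignment vertex, and every selector path to enter a distinct item gadget. Reading off which side each item leaves for the bins gives an assignment in which each bin receives total size at most $B$. Since the total size is $kB$, each bin receives exactly $B$.

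\textbf{Main obstacle.} The delicate step is designing the item gadget's labels and the internal path count precisely. We need the exact local optimum to differ by one depending on whether the selector path enters, and the uncovered residue to be exactly the $s_i$ assignment vertices of a single bin. We also need to rule out mixed configurations in which a gadget leaves a few vertices on both sides. We would handle this by a local exchange argument: any cover can be normalized within a gadget without increasing the number of paths. Finally, we verify that deleting $z$ and the hubs leaves disjoint directed paths (gadgets and isolated leaves), and that $U_G$ is acyclic, because edges go from leaves to $z$ and the hubs, and from those to the gadgets.
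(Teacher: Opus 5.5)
Your architecture matches the paper's: a reduction from restricted unary bin packing, a selector fed by $n$ leaves, one hub per bin fed by $B$ leaves, a budget equal to the number of in-degree-zero vertices, and the deletion set $\{z,h_1,\dots,h_k\}$. There is a genuine gap, though. The item gadget, where the reduction actually happens, is never specified, and the mechanism you describe for it contradicts $t_{\max}=2$.

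A selector path $\ell_i\xrightarrow{1}z\xrightarrow{2}c_i$ has used both time-steps when it reaches $c_i$, so it must end there. It cannot extend into the gadget ``toward $a$ or toward $b$'', and no side is ``covered for free''. The selector therefore always contributes the same single vertex $c_i$ and makes no choice. The binary choice would then have to come from slack inside the gadget, and slack is exactly what creates mixed configurations. For instance, take an alternating source/sink path with one more source than sinks to be covered. The idle source can sit at any position, leaving some assignment vertices for $a$ and the rest for $b$, which splits the item between two bins.

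Your proposed local exchange argument cannot remove these. A mixed configuration may be part of a genuinely feasible global cover, and normalizing it changes how many vertices each hub must absorb, which is not a local effect. The gadget has to make mixed configurations infeasible under the tight budget.

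The missing idea is to let the selector choose between the two \emph{ends} of the gadget rather than enter at its centre. In the paper, an item $i$ with $f(i)=\{c,d\}$ gets sources $u_i^1,\dots,u_i^{s_i}$ and sinks $r_i^0,\dots,r_i^{s_i}$, with arcs $u_i^j\xrightarrow{1}x_i^{d,j}\xrightarrow{2}r_i^{j-1}$ and $u_i^j\xrightarrow{1}x_i^{c,j}\xrightarrow{2}r_i^{j}$. This is an underlying path with alternating orientations, not a consistently directed one. The selector $\sigma$ (your $z$) has arcs $\sigma\xrightarrow{2}r_i^0$ and $\sigma\xrightarrow{2}r_i^{s_i}$.

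The budget is $K=2kB+n$, the number of sources, so every path in a cover of size $K$ contains exactly one source. Bin paths stop at an assignment vertex and item paths cover at most one sink each. Hence the $s_i+1$ sinks of gadget $i$ need at least one selector path, and counting the $n$ selector leaves gives exactly one, ending at $r_i^0$ or $r_i^{s_i}$.

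Suppose it ends at $r_i^{s_i}$. Then $r_i^0$ can only be covered by $u_i^1$ via $x_i^{d,1}$, then $r_i^1$ only by $u_i^2$ via $x_i^{d,2}$, and so on. This leaves exactly $X_i^c$ for hub $h_c$; the case $r_i^0$ is symmetric. This tight sink count both creates the binary choice and rules out mixed configurations, and it is absent from your sketch.
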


\begin{proof}
    We reduce from \RUBP. An instance consists of $n$ items with unary sizes
    $S=(s_1,\ldots,s_n)$ and $k$ bins. Each item $i\in[n]$ has a set
    $f(i)\in\binom{[k]}{2}$ of two eligible bins.
    The question is whether every item can be assigned to an eligible bin so
    that each bin receives total size exactly $B :=\sum_{i\in[n]}s_i / k$.
    The problem is
    \W$[1]$-hard parameterized by $k$~\cite{mfcs/HanakaLVY24}. We may assume
    that $B$ is an integer, since otherwise the instance is negative.
    We now construct a temporal digraph $G$ and an integer $K$ such that the \RUBP\
    instance is positive if and only if $G$ has a temporal path cover of size at most~$K$.

    \proofsubparagraph{Construction.}
    Fix an item $i\in[n]$, write $f(i)=\{c,d\}$ with $c<d$, and let $s:=s_i$.
    Introduce sources $u_i^1,\ldots,u_i^s$, sinks $r_i^0,\ldots,r_i^s$,
    and two sets $X_i^c:=\setdef{x_i^{c,j}}{j\in[s]}$ and $X_i^d:=\setdef{x_i^{d,j}}{j\in[s]}$
    of assignment vertices associated with bins $c$ and $d$, respectively.
    For every $j\in[s]$, add $u_i^j\xrightarrow{1}x_i^{d,j}\xrightarrow{2}r_i^{j-1}$
    and $u_i^j\xrightarrow{1}x_i^{c,j}\xrightarrow{2}r_i^j$,
    where each arrow is labelled by its unique active time-step.

    Next, introduce one selector vertex $\sigma$ and $n$ selector sources
    $\ell_1,\ldots,\ell_n$. For every $i \in [n]$, add
    $\ell_i\xrightarrow{1}\sigma$, $\sigma\xrightarrow{2}r_i^0$, and $\sigma\xrightarrow{2}r_i^{s_i}$.
    Thus a path through $\sigma$ can cover one extreme sink of one item gadget.

    Finally, for every bin $c\in[k]$, introduce a hub $h_c$ and $B$ bin
    sources $z_c^1,\ldots,z_c^B$. Add
    $z_c^q\xrightarrow{1}h_c$ for every $q\in[B]$ and
    $h_c\xrightarrow{2}x$ for every assignment vertex $x$ associated with
    bin~$c$. A path starting at a bin source can therefore cover at most one
    assignment vertex, necessarily one associated with its bin.

    This completes the construction of the temporal digraph $G$;
    part of it is illustrated in \cref{fig:item_gadget}.
    Setting $K := 2kB+n$ completes the construction of the instance of \TPCshort.

\begin{figure}[ht]
    \centering
    \begin{tikzpicture}[
        >=stealth,
        xscale=1.25, yscale=1.2,
        src/.style={circle, draw, semithick, fill=white,
                    inner sep=0pt, minimum size=6.5mm, font=\small},
        snk/.style={circle, draw, semithick, fill=black!12,
                    inner sep=0pt, minimum size=6.5mm, font=\small},
        binc/.style={circle, draw, semithick, fill=red!25,
                    inner sep=0pt, minimum size=6.5mm, font=\small},
        bind/.style={circle, draw, semithick, fill=blue!18,
                    inner sep=0pt, minimum size=6.5mm, font=\small},
        hub/.style={circle, draw, thick, inner sep=0pt,
                    minimum size=7mm, font=\small},
        core/.style={->, semithick},
        selector/.style={->, semithick, dashed},
        hubedge/.style={->, semithick, dotted},
        tlab/.style={font=\scriptsize, inner sep=1pt, fill=white},
    ]
        \node[snk] (r0) at (0,0) {$r_0$};
        \node[src] (u1) at (1,0) {$u_1$};
        \node[snk] (r1) at (2,0) {$r_1$};
        \node[src] (u2) at (3,0) {$u_2$};
        \node[snk] (r2) at (4,0) {$r_2$};
        \node[src] (u3) at (5,0) {$u_3$};
        \node[snk] (r3) at (6,0) {$r_3$};

        \node[binc] (xc1) at (1.5, 1.15) {$x^{c}_{1}$};
        \node[binc] (xc2) at (3.5, 1.15) {$x^{c}_{2}$};
        \node[binc] (xc3) at (5.5, 1.15) {$x^{c}_{3}$};
        \node[bind] (xd1) at (0.5,-1.15) {$x^{d}_{1}$};
        \node[bind] (xd2) at (2.5,-1.15) {$x^{d}_{2}$};
        \node[bind] (xd3) at (4.5,-1.15) {$x^{d}_{3}$};

        \node[hub] (sigma) at (-1.5,0) {$\sigma$};
        \node[src] (l1) at (-2.5, 0.95) {$\ell_1$};
        \node[src] (l2) at (-2.5, 0.00) {$\ell_2$};
        \node[src] (ln) at (-2.5,-0.95) {$\ell_n$};
        \node at (-2.5,-0.5) {$\vdots$};

        \node[hub, fill=red!25]  (hc) at (3, 2.0) {$h_c$};
        \node[hub, fill=blue!18] (hd) at (3,-2.0) {$h_d$};

        \foreach \j/\rl/\rr in {1/r0/r1, 2/r1/r2, 3/r2/r3}{
            \draw[core] (u\j) -- node[tlab,pos=.55]{$1$} (xc\j);
            \draw[core] (u\j) -- node[tlab,pos=.55]{$1$} (xd\j);
            \draw[core] (xc\j) -- node[tlab,pos=.45]{$2$} (\rr);
            \draw[core] (xd\j) -- node[tlab,pos=.45]{$2$} (\rl);
        }

        \draw[core] (l1) -- node[tlab,pos=.5]{$1$} (sigma);
        \draw[core] (l2) -- node[tlab,pos=.5]{$1$} (sigma);
        \draw[core] (ln) -- node[tlab,pos=.5]{$1$} (sigma);

        \draw[selector] (sigma) to[bend left=10]
            node[tlab,pos=.55]{$2$} (r0);
        \draw[selector] (sigma) to[bend left=95]
            node[tlab,pos=.1]{$2$} (r3);

        \foreach \j in {1,2,3}{
            \draw[hubedge] (hc) to[bend right=8]
                node[tlab,pos=.7]{$2$} (xc\j);
            \draw[hubedge] (hd) to[bend left=8]
                node[tlab,pos=.7]{$2$} (xd\j);
        }

        \begin{scope}[on background layer]
            \node[draw=black!45, dashed, rounded corners, inner sep=9pt,
                  fit=(r0)(r3)(xc1)(xc3)(xd1)(xd3)] (frame) {};
        \end{scope}
        \node[font=\scriptsize, anchor=north west]
            at (frame.south west) {item gadget $i$};
        \node[font=\scriptsize, anchor=north]
            at ([yshift=-2pt]ln.south) {selector sources};
    \end{tikzpicture}
    \caption{The gadget for an item $i$ of size $s_i=3$ eligible for bins
        $\{c,d\}$, with the item index suppressed. Solid edges belong to the
        item gadget, dashed edges come from the shared selector $\sigma$, and
        dotted edges come from the bin hubs. Selecting $r_i^{s_i}$ leaves
        $X_i^c$ for bin~$c$, whereas selecting $r_i^0$ leaves $X_i^d$ for
        bin~$d$. Every edge is labelled by its unique active time-step.}
    \label{fig:item_gadget}
\end{figure}

    \begin{lemma}\label{lem:TPC_packing_to_cover}
        If the \RUBP\ instance is positive,
        then $G$ has a temporal path cover of size $K$.
    \end{lemma}

    \begin{nestedproof}
        Consider a feasible assignment of the items.
        For every item $i$, write $f(i)=\{c,d\}$ with $c<d$.

        If item $i \in [n]$ is assigned to bin $c \in [k]$, cover $r_i^{s_i}$ by the path
        $\ell_i\xrightarrow{1}\sigma\xrightarrow{2}r_i^{s_i}$, and for every $j\in[s_i]$, use
        $u_i^j\xrightarrow{1}x_i^{d,j}\xrightarrow{2}r_i^{j-1}$.
        These paths cover all of $X_i^d$ and leave precisely $X_i^c$ uncovered.
        If $i$ is assigned to $d$, cover $r_i^0$ by the path $\ell_i\xrightarrow{1}\sigma\xrightarrow{2}r_i^0$,
        and use $u_i^j\xrightarrow{1}x_i^{c,j}\xrightarrow{2}r_i^j$
        for every $j\in[s_i]$, leaving precisely $X_i^d$ uncovered.

        Consequently, bin~$c$ is associated with exactly as many uncovered
        assignment vertices as the total size assigned to it, namely $B$.
        Match these vertices bijectively with the $B$ sources of bin~$c$ and
        cover them by paths
        $z_c^q\xrightarrow{1}h_c\xrightarrow{2}x$.
        This covers every vertex using $\sum_{i\in[n]}s_i+n+kB=K$ paths, as required.
    \end{nestedproof}

    \begin{lemma}\label{lem:TPC_cover_to_packing}
        If $G$ has a temporal path cover of size at most $K$, then the
        \RUBP\ instance is positive.
    \end{lemma}

    \begin{nestedproof}
        Let $\mathcal{C}$ be a temporal path cover of size at most $K$.

        \begin{claim}\label{clm:TPC_tight_cover}
            The cover $\mathcal{C}$ has size exactly $K$, and each of its paths
            contains exactly one source. In particular, every source belongs
            to exactly one path.
        \end{claim}

        \begin{claimproof}
            Every source has in-degree zero, and hence no directed path
            contains two sources. Since $G$ has exactly $K$ sources, every
            cover uses at least $K$ paths. Since $\mathcal{C}$ has at most $K$
            paths and must cover every source, all inequalities are tight.
        \end{claimproof}

        \begin{claim}\label{clm:TPC_selector_choice}
            For every item $i$, exactly one path through $\sigma$ covers a sink
            of its gadget, and that sink is either $r_i^0$ or $r_i^{s_i}$.
        \end{claim}

        \begin{claimproof}
            By \cref{clm:TPC_tight_cover}, every path in $\mathcal{C}$ contains
            a source.
            A bin-source path reaches an assignment vertex at time-step~$2$
            and cannot continue to a sink at the same time-step. Therefore,
            the sinks of gadget~$i$ can be covered only by its $s_i$ item-source
            paths or by paths through $\sigma$. Each item-source path covers at
            most one sink, whereas the gadget contains $s_i+1$ sinks. Hence at
            least one sink of every item gadget is covered through $\sigma$.

            By \cref{clm:TPC_tight_cover}, there are only $n$ paths containing
            selector sources. Since there are $n$ item gadgets, exactly one
            such path enters each gadget. The only sinks adjacent to $\sigma$
            are its two extreme sinks, proving the claim.
        \end{claimproof}

        \begin{claim}\label{clm:TPC_forced_item}
            Let $f(i)=\{c,d\}$ with $c<d$. If the selector path covers
            $r_i^{s_i}$, then the item-source paths leave precisely $X_i^c$
            uncovered. If it covers $r_i^0$, they leave precisely $X_i^d$
            uncovered.
        \end{claim}

        \begin{claimproof}
            Write $s:=s_i$. Suppose first that the selector covers $r_i^s$.
            By \cref{clm:TPC_selector_choice}, the item-source paths must cover
            all of $r_i^0,\ldots,r_i^{s-1}$. The only such path that can cover
            $r_i^0$ starts at $u_i^1$ and passes through $x_i^{d,1}$.
            Inductively, once the paths from
            $u_i^1,\ldots,u_i^{j-1}$ have been fixed, the only remaining
            item-source path that can cover $r_i^{j-1}$ starts at $u_i^j$ and
            passes through $x_i^{d,j}$. Thus all vertices of $X_i^d$ are
            covered and all vertices of $X_i^c$ are left uncovered.

            If the selector covers $r_i^0$, the symmetric backward argument
            starts with $r_i^s$, which forces the path through
            $u_i^s,x_i^{c,s}$, and then forces all item-source paths through
            $X_i^c$. Hence precisely $X_i^d$ is left uncovered.
        \end{claimproof}

        We now assign item $i$ to $c$ if its selector path covers
        $r_i^{s_i}$, and to $d$ if it covers $r_i^0$. By
        \cref{clm:TPC_forced_item}, an item of size $s_i$ leaves exactly $s_i$
        uncovered assignment vertices associated with its chosen bin. These
        vertices can be covered only by bin-source paths. Bin~$c$ has $B$ such
        paths, each covering at most one assignment vertex associated with
        $c$, so the total size assigned to any bin is at most $B$. The total
        size of all items is $kB$; therefore every bin receives total size
        exactly $B$. The resulting assignment is feasible.
    \end{nestedproof}
    Correctness follows by \cref{lem:TPC_packing_to_cover,lem:TPC_cover_to_packing}.
    Furthermore, as all item sizes are given in unary, the reduction runs in polynomial time.

    Every edge is active at exactly one of the two time-steps. Moreover, $G$ is
    a DAG: order first all sources, then $\sigma$ and the bin hubs, then the
    assignment vertices, and finally the sinks. Every edge points forward in
    this ordering. Finally, deleting $\sigma,h_1,\ldots,h_k$ leaves the item
    gadgets as disjoint paths and all selector and bin sources isolated.
    Hence the vertex-deletion distance of $U_G$ to a linear forest is at most
    $k+1$. This completes the parameterized reduction.
\end{proof}


\begin{corollary}\label{cor:TPC_undirected_hardness}
    \TPCshort\ is \W$[1]$-hard parameterized by $\ddlf(U_G)$, even when $G$ is undirected, $t_{\max}=2$, and
    every edge is active at exactly one time-step.
\end{corollary}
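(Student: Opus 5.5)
We will reuse the construction from the proof of \cref{thrm:TPC_tw_tau_hardness} unchanged, except that we forget the orientation of every edge. This gives an undirected temporal graph $G'$ with the same underlying (undirected) graph, the same labels, $t_{\max}=2$, and every edge active at exactly one time-step. Since $U_{G'}$ is the undirected graph underlying $U_G$, the deletion set $\{\sigma,h_1,\ldots,h_k\}$ still leaves a linear forest, so $\ddlf(U_{G'})\le k+1$. The target $K=2kB+n$ stays the same. It remains to show that $G'$ has a temporal path cover of size at most $K$ if and only if $G$ does.

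The forward direction is immediate. Every directed temporal path of $G$ is also a temporal path of $G'$, so \cref{lem:TPC_packing_to_cover} carries over verbatim.

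For the backward direction, the key step is to classify all temporal paths in $G'$. Labels are strictly increasing and only the values $1$ and $2$ occur, so any temporal path has at most two edges, and if it has two, the first is labelled $1$ and the second $2$. In the construction, every label-$1$ edge joins a source ($u_i^j$, $\ell_i$ or $z_c^q$) to a ``middle'' vertex (an assignment vertex, $\sigma$, or a hub). Every label-$2$ edge joins two non-source vertices: $\sigma r$, $h_c x$ or $x r$.

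We use this classification as follows.
\begin{itemize}
\item Sources are incident only to label-$1$ edges. Hence a path contains at most one source, so the counting of \cref{clm:TPC_tight_cover} still holds: a cover of size at most $K$ has exactly $K$ paths, each containing exactly one source.
\item Take a path that contains a source and has at least one edge. Its label-$1$ edge must be traversed away from the source. If it were traversed into the source, the path would end there, since the source has no label-$2$ edge; but then the source is an endpoint reached from the middle vertex, and the path is just that single edge reversed. Such a path covers a subset of what the forward edge covers, so it can be replaced by the forward edge without loss.
\item Consequently, after this normalisation every path of a tight cover is either a single source vertex, a forward label-$1$ edge, or source $\to$ middle $\to$ label-$2$ neighbour of the middle vertex. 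Every such path is a directed temporal path of $G$, because each label-$2$ edge at $\sigma$ or $h_c$ is oriented away from it, and at an assignment vertex $x$ the label-$2$ edges go to a sink or come from a hub.
\end{itemize}

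The one genuinely new case is a path source $\to x \to h_c$, which uses the hub edge backwards. The point to check carefully is that this case is harmless. Such a path covers $h_c$ together with the same $x$ as the directed path source $\to x \to r$ would, but without a sink. So it is dominated by the directed option as far as the sink-counting arguments of \cref{clm:TPC_selector_choice,clm:TPC_forced_item} are concerned; those arguments only ever ask which paths can cover the sinks, and this path covers none. Hubs are covered anyway by the $kB>0$ bin-source paths, which must pass through them. Therefore the proofs of \cref{clm:TPC_selector_choice,clm:TPC_forced_item} and the final counting argument in \cref{lem:TPC_cover_to_packing} go through unchanged in $G'$, which completes the parameterized reduction.
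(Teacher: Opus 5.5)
Your proposal is correct and follows essentially the same route as the paper. Forget orientations, and observe that sources are incident only to label-$1$ edges, so each temporal path contains at most one source (forcing exactly $K$ paths with one source each). Then check that the only paths that can reach a sink are item paths $u_i^j\to x\to r$ and selector paths $\ell_i\to\sigma\to r$, so the selector-choice and forced-item claims and the final bin count carry over unchanged. Your explicit handling of the reversed-hub path $u_i^j\to x\to h_c$ is a detail the paper leaves implicit: it covers no sink, and the counting in those claims then forces every item-source path to end at a sink, so it never occurs. Your side remark that all bin-source paths pass through the hubs is true but unproved and not needed.
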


\begin{proof}
    Regard every edge of the construction as undirected, without changing its
    time-step. Let $M$ be the set of all item, selector, and bin sources, so
    $|M|=K$. Every vertex of $M$ is incident only with time-step-$1$ edges, no
    edge has both endpoints in $M$, and a temporal path uses at most one edge
    at time-step~$1$. Thus every temporal path contains at most one vertex of
    $M$, and every cover has at least $K$ paths.

    The cover constructed in \cref{lem:TPC_packing_to_cover} remains valid.
    Conversely, a cover of size $K$ contains exactly one vertex of $M$ per
    path. Any path covering a sink must therefore be an item path
    $u_i^j-x-r$ or a selector path $\ell_t-\sigma-r$; a bin path reaches an
    assignment vertex only at time-step~$2$ and cannot continue. Hence the
    arguments of
    \cref{clm:TPC_selector_choice,clm:TPC_forced_item} apply unchanged, and
    yield a feasible packing. The structural bound is unaffected.
\end{proof}

\begin{remark}
    The $t_{\max}=2$ value in \cref{thrm:TPC_tw_tau_hardness,cor:TPC_undirected_hardness} is tight, since \TPCshort\ trivially reduces to \mEC\ when $t_{\max}=1$ and can thus be solved in polynomial time.
\end{remark}

\subsection{FPT algorithm parameterized by vertex cover number plus number of time-steps}

With Theorem~\ref{thrm:TPC_tw_tau_hardness}, we have proven that \TPCshort\ cannot be \FPT\ by $\tw(U_G)+t_{\max}$ under standard assumptions. 
We now consider the more restrictive parameterization by vertex cover number instead of treewidth,
and obtain such an \FPT\ algorithm.

Roughly, our algorithm works as follows. First, we identify an \FPT-bounded number of equivalence classes
on the vertices of the independent set, such that any vertex is \emph{interchangeable} by any other
vertex in the same class, for any path of the solution. Consequently, it suffices to keep track of
the sequence of equivalence classes of the vertices that each path traverses, and aim for a solution
that, in total, traverses each equivalence class at least as many times as its cardinality. To do so,
we encode this as an ILP of bounded dimension, which is known to be in \FPT.

\begin{theorem}\label{thrm:TPC_vc_tau}
    Let $G=(V,E_1,\ldots,E_{t_{\max}})$ be a temporal digraph.
    Then \TPCshort\ is fixed-parameter tractable parameterized by $\vc(U_G)+t_{\max}$.
\end{theorem}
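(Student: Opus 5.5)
Let $X$ be a minimum vertex cover of $U_G$, computable in FPT time, with $k=|X|$, and let $I=V\setminus X$ be the remaining independent set. My plan is to group the vertices of $I$ into types, reduce a cover to counts of path ``shapes'', and solve the resulting integer program with Lenstra's algorithm~\cite{Lenstra_ILP}.

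\textbf{Types.} For $v\in I$, define its \emph{type} as the pair of functions assigning to each $x\in X$ the label sets $\lambda((v,x))$ and $\lambda((x,v))$. Each is a subset of $[t_{\max}]$, possibly empty. So there are at most $2^{2kt_{\max}}$ types, and vertices of the same type are interchangeable in any temporal path.

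\textbf{Shapes.} Because $I$ is independent, any temporal path alternates so that no two consecutive vertices lie in $I$. The path therefore visits at most $k$ vertices of $X$ and at most $k+1$ vertices of $I$. A \emph{shape} records:
\begin{itemize}
\item the ordered sequence of distinct $X$-vertices on the path;
\item for each gap (before, between, or after these $X$-vertices), whether an $I$-vertex appears there, and if so its type;
\item the time label of every edge used.
\end{itemize}
Feasibility of a shape (edges active, labels strictly increasing) depends only on this data. The number of shapes is bounded by a function $g(k,t_{\max})$. One subtlety is that a shape may place two $I$-vertices of the same type, and the path must use distinct vertices. I will handle this by a Hall-type argument, which is the next step.

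\textbf{Reduction to counting.} I claim $G$ has a temporal path cover of size at most $K$ if and only if there are nonnegative integers $y_S$, one per feasible shape $S$, satisfying three conditions:
\begin{itemize}
\item $\sum_S y_S\le K$;
\item every $x\in X$ appears in some shape $S$ with $y_S\ge1$;
\item for every type $\tau$ with $n_\tau$ vertices, $\sum_S y_S\cdot \mathrm{occ}_\tau(S)\ge n_\tau$, where $\mathrm{occ}_\tau(S)$ counts the $I$-slots of type $\tau$ in $S$.
\end{itemize}

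The forward direction is immediate: map each path of the cover to its shape.

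For the converse, I must realize each copy of a shape with concrete, pairwise distinct $I$-vertices inside each path, while covering every vertex of each class. Covers may overlap, so I fill the slots of each type $\tau$ greedily, in round-robin fashion over the $n_\tau$ vertices. A single path needs distinct vertices in its at most $k+1$ slots. This is possible whenever $n_\tau\ge\mathrm{occ}_\tau(S)$.

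\textbf{Main obstacle.} Types with few vertices are the difficult case. To resolve it, I will require in the shape definition that $\mathrm{occ}_\tau(S)\le n_\tau$, discarding shapes that violate this. With that restriction, round-robin assignment satisfies both requirements: it gives distinct vertices within each path, and it covers all $n_\tau$ vertices because the total number of slots is at least $n_\tau$.

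\textbf{ILP.} The condition that every $x\in X$ is covered is not linear as stated. I handle it by guessing, for each $x$, one shape that covers it; there are at most $g^k$ such guesses. For each guess I add the constraints $y_S\ge1$ for the guessed shapes. This yields an ILP with $g(k,t_{\max})$ variables and polynomially bounded coefficients, minimizing $\sum_S y_S$, which Lenstra's algorithm solves in FPT time.

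I expect correctness of the shape abstraction, in particular the distinctness and realizability argument above, to be the main point needing care. The counting itself is routine.
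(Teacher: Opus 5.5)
Your proposal is correct and follows essentially the same route as the paper: signature classes on $I=V\setminus X$, path shapes with at most $k+1$ $I$-slots restricted by $\mathrm{occ}_\tau(S)\le n_\tau$, round-robin realization of the copies, and Lenstra's algorithm on an ILP with $g(k,t_{\max})$ variables. The differences are cosmetic. The guessing step for $X$-vertices is unnecessary, since ``$x$ is covered'' is already the linear constraint $\sum_{S\ni x}y_S\ge 1$; the paper gets this by adding each $\{x\}$ as a singleton class. The paper also leaves time labels out of the shape and stores one greedy realization instead.
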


\begin{proof}
    Let $X$ be a vertex cover of $U_G$ of size $\vc$, where edge orientations are
    ignored, and let $I=V\setminus X$. In particular, there is no edge between
    two vertices of $I$ at any time-step.

    For $u\in I$, its \emph{temporal signature} records, for every $x\in X$ and
    $t\in[t_{\max}]$, whether $(u,x)\in E_t$ and whether $(x,u)\in E_t$. We partition
    $I$ into classes $\mathcal{Q}=\{Q_1,\ldots,Q_q\}$ of vertices with the same
    temporal signature, where $q\leq 2^{2\vc t_{\max}}$.

    Note that vertices in the same class are interchangeable: replacing one by another
    preserves all temporal edges between that vertex and $X$. Furthermore, it is easy
    to see that every temporal path contains at most $L:=\min\{2\vc+1,t_{\max}+1\}$ vertices:
    since $I$ is an independent set, between any two vertices of $I$ on the path there
    is a vertex of $X$, and since the edge time-steps are strictly increasing elements
    of $[t_{\max}]$, it has at most $t_{\max}$ edges.

    We define the \emph{type} of a temporal path as its vertex sequence after replacing
    every vertex of $I$ by its class in $\mathcal{Q}$; note that the time-steps are not part of the type.
    We say that a type is \emph{realizable} if its sequence can be instantiated as a temporal path.
    In particular, no vertex of $X$ may occur twice and, if a class $Q_i$ occurs $r$ times, then $|Q_i|\geq r$.
    The latter condition allows us to assign distinct representatives of $Q_i$ to these
    occurrences. Since all vertices of a class have the same temporal
    signature, realizability is independent of the chosen representatives.
    Moreover, since \TPCshort\ imposes no compatibility conditions between
    different paths, it suffices to store one realization of each type.

    \begin{claim}
        The set $\mathcal{P}$ of realizable path types has size bounded by a
        function of $\vc+t_{\max}$, and it can be enumerated in \FPT\ time.
    \end{claim}

    \begin{claimproof}
        A type has at most $L$ positions, each containing an element of
        $X\cup\mathcal{Q}$. Therefore,
        \[
            |\mathcal{P}|
            \leq \sum_{\ell=1}^{L}(\vc+q)^\ell,
        \]
        which is bounded by a function of $\vc+t_{\max}$. We enumerate these
        sequences and discard those violating the simplicity conditions above.

        It remains to test temporal realizability. A singleton sequence is
        always realizable. For a sequence $(z_1,\ldots,z_\ell)$ with
        $\ell\geq 2$, set $t_0=0$ and, for $j\in[\ell-1]$, let $t_j$ be the
        earliest time-step larger than $t_{j-1}$ at which the directed edge
        $(z_j,z_{j+1})$ is active. Here, an endpoint in $\mathcal{Q}$ may be
        replaced by any representative of that class. If some $t_j$ does not
        exist, then the type is not realizable; otherwise,
        $t_1<\cdots<t_{\ell-1}$ realizes it. This greedy test is correct because
        choosing an edge later can only reduce the available choices for the
        remaining suffix. We store this realizing sequence with the type.
    \end{claimproof}

    Let $\mathcal{R} := \mathcal{Q}\cup\setdef{\{x\}}{x\in X}$, that is,
    $\mathcal{R}$ is a partition of $V$ consisting of the signature classes
    in $I$ and one singleton class for every vertex of $X$. For every
    $P\in\mathcal{P}$, introduce a variable
    $y_P\in\mathbb{Z}_{\geq 0}$ denoting how many paths of type $P$ are
    selected. For $R\in\mathcal{R}$, let $a_{R,P}$ be the number of
    occurrences of vertices from $R$ in $P$. Consider the integer linear
    program
    \[
        \begin{array}{lll}
        \text{minimize}   & \displaystyle\sum_{P\in\mathcal{P}}y_P  & \\[1ex]
        \text{subject to} &
        \displaystyle\sum_{P\in\mathcal{P}}a_{R,P}y_P\geq |R|       & \text{for every }R\in\mathcal{R},\\[2ex]
                        & y_P\in\mathbb{Z}_{\geq 0}                 & \text{for every }P\in\mathcal{P}.
        \end{array}
        \tag{$\star$}
    \]

    \begin{claim}
        The optimum value of $(\star)$ is equal to the minimum size of a
        temporal path cover of $G$.
    \end{claim}

    \begin{claimproof}
        Given a temporal path cover, group its paths by type and set the
        variables accordingly. Since every vertex in every
        $R\in\mathcal{R}$ is covered, all constraints are satisfied. This gives
        a feasible ILP solution of the same value.

        Conversely, let $y$ be a feasible integer solution and take $y_P$
        copies of every type $P$. Fix a class $R\in\mathcal{R}$. By its
        constraint, there are at least $|R|$ occurrences of $R$ over all
        copies. Order these occurrences copy by copy and assign the vertices
        of $R$ cyclically. Every vertex of $R$ is then used, and occurrences
        within one copy receive distinct vertices because a realizable type
        contains at most $|R|$ occurrences of $R$. Applying this independently
        to every class and using the stored time-steps turns all copies into
        temporal paths covering $V$.

        The resulting cover has size at most
        $\sum_{P\in\mathcal{P}}y_P$; duplicate paths, if any, can be discarded.
        Together with the first direction, this proves that the two optimum
        values are equal.
    \end{claimproof}

    The ILP has $|\mathcal{P}| \le f(\vc,t_{\max})$ variables, for some function $f$.
    It can therefore be solved to optimality in \FPT\ time as \textsc{Integer Linear Programming}
    is \FPT\ parameterized by the number of variables~\cite{Lenstra_ILP}.
    A minimum vertex cover can also be found in \FPT\ time parameterized by $\vc$~\cite[Section 3.1]{books/CyganFKLMPPS15},
    and the theorem follows.
\end{proof}

\section{\NP-hardness for T(D)PC with constant structural parameters}\label{sec:para-NP-hardness}

We complement the results of \cref{sec:TPC_parameterized} with para-\NP-hardness results for \TPCshort\ and \TDPCshort\ for various structural parameters.
As an intermediate step for some of our results, we will prove that \PC\ (i.e., the static graph analogue of \TPCshort, cf.~\cite{Foucaud_static_caldam}) becomes \NP-hard in certain graph classes if a restriction is placed on the maximum length of paths in the cover (\cref{def:RLPC}).

\begin{definition}[\RLPCshort]\label{def:RLPC}
    Given an undirected graph $G$ and a positive integer $\ell$, the \RLPC\ (\RLPCshort) problem asks to compute a minimum-cardinality set of paths in $G$, each with length at most $\ell$, such that every vertex of $G$ belongs to at least one of the paths.
\end{definition}

\subsection{Hardness for constant deletion distance to linear forest}

We prove that \TPCshort\ and \TDPCshort\ remain \NP-hard even when $\ddlf(U_G) = 1$,
through a reduction from \NMTS\ (\cref{def:NMTS}), which is a standard \NP-complete problem (e.g.,~\cite{mfcs/EgamiGHKLMNOV025,finite_covering,ICALP_kVisits,kVisits}). See \cref{subsec:prelims_parameterized} for the definition and usefulness of the $\ddlf$ parameter.

\begin{definition}[\NMTSshort]\label{def:NMTS}
    Given three (multi)sets $A=\{a_1,\ldots,a_n\}$, $B=\{b_1,\ldots,b_n\}$ and $T=\{t_1,\ldots,t_n\}$ of positive integers, the \NMTS\ $(\NMTSshort)$ problem asks whether there is a subset $M$ of $A \times B \times T$ s.t. every $a_i \in A$, $b_i \in B$, $t_i \in T$ occurs exactly once in~$M$ and for every triplet $(a,b,t)\in M$ it holds that $a+b = t$.
\end{definition}

\begin{theorem}[Hulett, Will, Woeginger 2008~\cite{NMTS_distinct}]\label{thrm:NMTS_distinct}
    \NMTSshort\ is strongly \NP-complete even when all $3n$ input elements are distinct.
\end{theorem}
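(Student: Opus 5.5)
Membership in \NP\ is immediate: a certificate is the set $M$ of $n$ triplets, and checking that every element is used exactly once and that $a+b=t$ holds for each triplet takes polynomial time. All the work is in strong \NP-hardness with pairwise distinct elements. I would not start from a general \NMTSshort\ instance, because scaling and shifting cannot separate equal values. If $a_i=a_j$, any tag added to make $a_i'\neq a_j'$ must be compensated on the $T$ side, and that depends on the unknown matching. Instead I would reduce directly from \textsc{3-Dimensional Matching} (3DM), as in the classical Garey--Johnson proof that \NMTSshort\ is strongly \NP-complete. That construction already builds numbers from the combinatorial structure, so distinctness can be built in from the start.

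\textbf{Reduction.} Let the 3DM instance have ground sets $W,X,Y$ of size $q$ and a triple set $\mathcal{T}\subseteq W\times X\times Y$ with $|\mathcal{T}|=m$. I would encode each number in base $D$, with $D$ polynomial in $m$, and use digit blocks reserved for the $W$-, $X$- and $Y$-coordinates and for triple indices. Each occurrence of an element $w\in W$ in a triple $\tau\in\mathcal{T}$ gets its own number. One designated \emph{real} occurrence encodes ``$w$ is covered here''. The other occurrences are \emph{dummy} and are absorbed by filler numbers of a separate magnitude class. The key design choice is that every number carries the index of the triple or occurrence it comes from in a low-order block. Then:
\begin{enumerate}
\item numbers in the same set differ in their index block, so all elements are pairwise distinct;
\item the sets $A$, $B$, $T$ use disjoint residue classes in a top control digit, so numbers in different sets are also distinct.
\end{enumerate}
Target values in $T$ are chosen so that $a+b=t$ can hold only when the index blocks and control digits add without carries and match. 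With $D$ larger than $3$ times every digit, no carries occur, so each equation splits digit by digit into independent equalities.

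\textbf{Steps, in order.}
\begin{enumerate}
\item Write down the numbers for the real and dummy occurrences and for the fillers, counting them so that $|A|=|B|=|T|$.
\item Completeness: a perfect matching $\mathcal{M}\subseteq\mathcal{T}$ yields a numerical matching. Use the real encodings for triples in $\mathcal{M}$ and pair the dummy and filler numbers with the remaining triples.
\item Soundness: any numerical matching decomposes digit by digit. The control digit forces $A$-$B$-$T$ roles, the index block forces each triple's numbers to stay together, and the coordinate blocks force every $w$, $x$, $y$ to be used in a real way exactly once. This recovers a perfect matching.
\item Strong \NP-hardness: all numbers are at most $D^{O(1)}$ with $D=\mathrm{poly}(m)$, so they are polynomially bounded, and hardness survives a unary encoding.
\end{enumerate}

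\textbf{Main obstacle.} I expect soundness with the added index tags to be the hardest step. The tags must guarantee distinctness, yet they must not create spurious triplets. Conversely, because the tags have to cancel exactly, they may rule out the intended matchings. I would first try giving the $A$-number of triple $\tau$ the tag $\tau$ and its $B$-number the tag $m\tau$, and give the $T$-number of $\tau$ the tag $(m+1)\tau$. Then $\tau+m\tau'=(m+1)\tau''$ with $1\le\tau,\tau',\tau''\le m$ forces $\tau=\tau'=\tau''$, by uniqueness of the base-$(m+1)$ representation once carries are excluded. The dummy and filler numbers would get tags from a disjoint range treated the same way. Since this is exactly the published result of Hulett, Will and Woeginger, if this tag bookkeeping became unwieldy I would fall back on citing their construction directly.
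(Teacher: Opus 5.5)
\textbf{The paper gives no proof of this theorem.} It imports the result directly from Hulett, Will and Woeginger, so your fallback of citing them is exactly what the paper does. Your NP-membership argument is fine. The self-contained hardness route, however, fails at precisely the step you flag, and the problem is structural rather than just unwieldy bookkeeping.

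\textbf{The tag scheme removes all freedom from the matching.} You obtain distinctness by giving every number a tag that is unique within its set. Your tag arithmetic ($\tau$ on $A$, $m\tau$ on $B$, $(m+1)\tau$ on $T$, with dummies and fillers ``treated the same way'' on a disjoint range) then forces $\tau=\tau'=\tau''$ in every triplet. Together, these imply that the only possible solution groups the unique $A$-, $B$- and $T$-number carrying each tag. The produced \NMTSshort\ instance is then a yes-instance exactly when these forced triplets sum correctly. That is fixed by the construction alone and does not depend on whether the 3DM instance has a perfect matching. Its image is trivially decidable, so the map cannot be a correct reduction unless 3DM is solvable in polynomial time.

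Leaving the real element numbers untagged (tag $0$, distinguished only by coordinate blocks) does not rescue this. The equation $0+m\tau'=(m+1)\tau''$ has no solution with $1\le\tau',\tau''\le m$, and neither do the other mixed cases. So untagged numbers can only be combined with untagged numbers. The instance still splits into independent pieces, and nothing couples different triples through a shared element.

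\textbf{The missing idea.} You need a way to make interchangeable numbers distinct without destroying their interchangeability. In the Garey--Johnson-style construction, the dummy copies of an element $w$ coincide precisely because any of them must be able to fill any unselected triple containing $w$. The encoding of which triple is selected lives entirely in that freedom. If you tie each dummy to a specific triple, the counting breaks:
\begin{itemize}
\item with one real copy and $\deg(w)-1$ tied dummies, one triple is predetermined to receive the real copy;
\item with $\deg(w)$ tied dummies, one dummy is left over, and its absorber must accept dummies with different tags, which is the same compensation problem again.
\end{itemize}
A working construction needs perturbations that can be compensated whichever admissible partner is chosen. That requires additional gadgetry beyond per-number tags. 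Your proposal does not supply it, so the hardness part, as written, rests entirely on the citation.
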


The following auxiliary lemma will be useful for our reductions.

\begin{lemma}\label{lem:NMTS_hardness}
    \NMTSshort\ is strongly \NP-complete even when:
    \begin{enumerate}
        \item $\min(B)>2\max(A)$,
        \item $\min(T)>\max(B)$,
        \item all $3n$ input elements are distinct.
    \end{enumerate}
\end{lemma}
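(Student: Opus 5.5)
We reduce from the version of \NMTSshort\ given by \cref{thrm:NMTS_distinct}, where all $3n$ elements are already distinct, by applying a simple affine shift to each of the three multisets. The shift preserves the solution set and enforces the two separation conditions. Let $(A,B,T)$ be such an instance and let $M:=\max(A\cup B\cup T)$. Choose a large integer $D$, for instance $D:=4M+1$, and define
\[
a_i' := a_i,\qquad b_i' := b_i + D,\qquad t_i' := t_i + D .
\]
For every triple, $a+b=t$ holds if and only if $a'+b'=t'$. A subset $M^\ast\subseteq A\times B\times T$ is therefore a valid solution of the original instance exactly when its image under the bijection $(a,b,t)\mapsto(a,b+D,t+D)$ is a valid solution of the new instance. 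This gives equivalence of the two instances.

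Next we verify the conditions in order. Condition 1 holds because $\min(B')\ge 1+D>2M\ge 2\max(A')$. Condition 3 needs more care, since the shift could make an element of $A'$ equal to an element of $B'\cup T'$. Elements within $A'$, within $B'$ and within $T'$ stay pairwise distinct, since each of these is a translate of a set of distinct elements. Every element of $A'$ is at most $M<D<\min(B'\cup T')$, so $A'$ is disjoint from $B'\cup T'$. Finally, $b_i+D=t_j+D$ if and only if $b_i=t_j$, which is excluded by the original distinctness.

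Condition 2 is the one step that is not automatic, because shifting $B$ and $T$ by the same $D$ does not separate them. We handle it with two observations.
\begin{itemize}
\item Each $a\in A$ is a positive integer, so every valid triple satisfies $t=a+b>b$. This alone does not give $\min(T)>\max(B)$.
\item To obtain that inequality, shift the three sets by different amounts: $a_i'=a_i+D$, $b_i'=b_i+3D$ and $t_i'=t_i+4D$. The sums are still preserved, since $D+3D=4D$.
\end{itemize}
Now take $D:=M+1$ and check each condition with $M$ bounding every original value:
\begin{itemize}
\item \emph{Condition 2:} $\min(T')\ge 4D+1$, while $\max(B')\le M+3D<4D$.
\item \emph{Condition 1:} $2\max(A')\le 2M+2D<4D$, while $\min(B')\ge 3D+1$. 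Hence $\min(B')>2\max(A')$.
\item \emph{Distinctness across sets:} the shifted sets lie in the disjoint intervals $A'\subseteq(D,2D)$, $B'\subseteq(3D,4D)$ and $T'\subseteq(4D,5D)$, since $M<D$. So they are pairwise disjoint.
\item \emph{Distinctness within sets:} each set is a translate of a set of distinct elements, so its elements remain pairwise distinct.
\end{itemize}

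The new numbers are bounded by $5D=O(M)$, polynomial in the original values, so the reduction preserves strong \NP-hardness. Membership in \NP\ is immediate.
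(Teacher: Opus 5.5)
Your overall strategy is the same as the paper's: translate $A$, $B$, $T$ by offsets $\delta_A,\delta_B,\delta_T$ with $\delta_A+\delta_B=\delta_T$, so that solutions correspond bijectively, and choose the offsets so that the three sets land in separated ranges. However, your final choice $D:=M+1$ does not guarantee Condition~1. Your verification of it is a non sequitur: from $2\max(A')<4D$ and $\min(B')\ge 3D+1$ nothing follows about their relative order, since $2\max(A')$ can be as large as $2(M+D)=4D-2$.

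Here is a concrete failure. Take $A=\{2,9\}$, $B=\{1,3\}$, $T=\{5,10\}$, which is a yes-instance with six distinct elements. Then $M=10$ and $D=11$, so $A'=\{13,20\}$ and $B'=\{34,36\}$, giving $2\max(A')=40>34=\min(B')$. In general, Condition~1 for your offsets reads $\min(B)+3D>2\max(A)+2D$, i.e.\ $D>2\max(A)-\min(B)$. The choice $D=M+1$ violates this whenever $2\max(A)-\min(B)\ge M+1$.

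The fix is just a larger $D$. With $D:=2M$ you get $2\max(A')\le 2M+2D=3D<3D+1\le\min(B')$. Conditions~2 and~3 go through verbatim, since they only used $D>M$: the ranges $[D+1,D+M]$, $[3D+1,3D+M]$, $[4D+1,4D+M]$ are disjoint and increasing. The paper instead uses offsets $\max(B)$, $2\max(A)+2\max(B)$ and $2\max(A)+3\max(B)$. These are tuned so that Conditions~1 and~2 reduce exactly to $\min(B)>0$ and $\min(T)>0$, and distinctness follows from the same range separation.

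You should also drop the first, abandoned construction ($a'=a$, $b'=b+D$, $t'=t+D$) from the write-up. It does not satisfy Condition~2 and only obscures which construction the proof actually relies on.
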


\begin{proof}
    Take an \NMTSshort\ instance $I=(A=\{a_1,\ldots,a_n\},B=\{b_1,\ldots,b_n\},T=\{t_1,\ldots,t_n\})$ and modify it as follows.
    \begin{itemize}
        \item Increase all $a_i\in A$ by $\max(B)$.
        \item Increase all $b_i\in B$ by $2\max(A)+2\max(B)$.
        \item Increase all $t_i\in T$ by $2\max(A)+3\max(B)$.
    \end{itemize}
    It is clear by definition that the resulting \NMTSshort\ instance is equivalent to $I$. Since $A,B,T$ contain (strictly) positive integers, it follows that the resulting instance $(A',B',T')$ satisfies $\min(B')>2\max(A')$ and $\min(T')>\max(B')$. The described transformation preserves strong \NP-hardness, since the values of all elements remain polynomial in $n$.

    By \cref{thrm:NMTS_distinct}, we can apply the above to an instance with distinct elements. It is clear that the resulting instance then also has distinct elements.
\end{proof}

We will first show hardness for \RLPC\ with constant $\ddlf(G)$. Note that, in the context of the subsequent reductions, we will use the respective decision versions of \RLPCshort, \TPCshort\ and \TDPCshort, asking whether there exists a path cover of some fixed size.
Our hardness construction is inspired by a recent result proving analogous hardness for the \textsc{Telephone Broadcast} problem~\cite{mfcs/EgamiGHKLMNOV025}. 


\begin{theorem}\label{thrm:restricted_length}
    \RLPCshort\ is \NP-hard even when $\ddlf(G)=1$.
\end{theorem}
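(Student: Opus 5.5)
We reduce from \NMTSshort, using the restricted form of \cref{lem:NMTS_hardness}: $\min(B)>2\max(A)$, $\min(T)>\max(B)$, all $3n$ elements distinct, and all values polynomial in $n$ (strong \NP-hardness). The graph will consist of one central vertex $c$ together with a collection of disjoint paths, some of whose vertices are joined to $c$. Deleting $c$ then leaves a linear forest, so $\ddlf(G)=1$. The idea, following Egami et al., is that each path of the cover through $c$ consists of two ``arms'' whose lengths encode $a$ and $b$, joined at $c$. Path length is our only way to encode the numbers.

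\textbf{Construction.} For every $t\in T$, create a path $P_t$ on roughly $t$ vertices, one endpoint of which is adjacent to $c$. For every $a\in A$ and $b\in B$, create paths $P_a$ and $P_b$ whose lengths are suitable complements. The intended cover uses, for each triple $(a,b,t)$, one path of length at most $\ell$. This path runs along $P_a$ and then $P_b$, which are linked via $c$ or via a dedicated connector vertex adjacent to $c$. The remaining pieces are forced into a fixed number of length-bounded paths, and the budget is set to exactly this count. Because only one path may pass through $c$, the $t$-gadgets must instead be realised by pairing segments: the $a$-segment and the $b$-segment are each hung from the $T$-path by edges to $c$. Concretely, I would make $c$ adjacent to one endpoint of every $P_a$, every $P_b$ and every $P_t$. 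I would then choose $\ell$ and the path lengths $L - x$ for $x\in A\cup B\cup T$, with $L$ a large padding constant, so that the only way to meet the budget is the following. Each long path covers one $T$-path entirely and extends into exactly one $A$-piece and one $B$-piece, which is possible only through pairings that do not all use $c$.

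\textbf{Main obstacle.} The crux is that all paths through $c$ share a single vertex, so many paths can pass through $c$ in a cover (covering, unlike partition, allows overlap). The length-bound plus the counting must therefore force each cover path to be used ``tightly'': every cover path covers a prescribed number of new vertices, and the total equals $|V|$ exactly. I would first establish a counting lemma: every cover path contains at most $\ell+1$ vertices, $|V|=K(\ell+1)-(\text{overlap at } c)$, and hence only tight paths appear. From tightness I would deduce the pairing structure. The separation conditions $\min(B)>2\max(A)$ and $\min(T)>\max(B)$ are what rule out degenerate combinations, such as a path absorbing two $A$-pieces or a $B$-piece paired with a $T$-path in the wrong role. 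Distinctness keeps the gadgets pairwise distinguishable.

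\textbf{Correctness.} The forward direction, from a matching to a cover, follows directly from the construction. For the backward direction, the counting and separation arguments yield a bijection of triples with $a+b=t$. I expect pinning down the exact lengths and the budget so that the tightness argument goes through to be the delicate part. I would tune these by first testing $n=2$ by hand.
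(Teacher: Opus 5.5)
\textbf{There is a genuine gap.} You never fix a construction: the lengths, $\ell$ and the budget are left to be ``tuned''. The one construction you do commit to cannot work.

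Suppose $c$ is adjacent to exactly one endpoint of each of $P_a$, $P_b$, $P_t$, and these paths are the components of $G-c$. Then $G$ is a subdivided star.
\begin{itemize}
\item A simple path meets $c$ at most once, so it visits at most two legs. Hence ``each long path covers one $T$-path entirely and extends into one $A$-piece and one $B$-piece'' is impossible. There are also no connections between gadgets that avoid $c$.
\item A dedicated connector between a fixed $P_a$ and a fixed $P_b$ would hard-wire their pairing, which is exactly the choice the reduction must leave open.
\item The path containing the far endpoint of $P_t$ can enter $P_t$ only from $c$, so it covers all of $P_t$ by itself. A $t$-gadget is therefore never split between an $a$-path and a $b$-path, and the ternary condition $a+b=t$ is never encoded.
\end{itemize}
The claim that ``only one path may pass through $c$'' is also false for covers, as you later note yourself. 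Without a concrete graph, your tightness/counting lemma and the backward direction remain unargued.

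\textbf{The missing idea: attach each $t$-gadget at both ends.} Take a path $z^k_1,\dots,z^k_{t_k}$ with both $z^k_1$ and $z^k_{t_k}$ adjacent to $c$. This is a cycle through $c$, so two cover paths can enter it from opposite sides and meet in the middle.

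The numbers $a_i,b_j$ are then encoded as leftover length budget, not as pieces to traverse:
\begin{itemize}
\item Add leaves $x_i,y_j$, joined to $c$ by paths of length $\ell-a_i$ and $\ell-b_j$ respectively.
\item Set $\ell=2\max(B)+1$ and the budget to $2n$ paths.
\end{itemize}

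The backward direction then goes as follows.
\begin{itemize}
\item Any two leaves are at distance at least $2(\ell-\max(B))>\ell$. So each of the $2n$ cover paths contains exactly one leaf.
\item The path from $x_i$ (resp.\ $y_j$) covers at most $a_i$ (resp.\ $b_j$) $z$-vertices, all in a single cycle.
\item Since $\min(T)>\max(B)$, every cycle needs at least two paths, hence exactly two.
\item Since $\min(T)>2\max(A)$, at most one of these two is an $x$-path. By counting, each cycle gets exactly one $x$-path and one $y$-path, giving $a_i+b_j\ge t_k$.
\item Equality then follows from $\sum A+\sum B=\sum T$.
\end{itemize}
In the forward direction, the $x_i$-path covers $z^k_1,\dots,z^k_{a_i}$ and the $y_j$-path covers $z^k_{t_k},\dots,z^k_{a_i+1}$.

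Deleting $c$ leaves a linear forest. Distinctness of the elements is not needed for this problem.
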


\begin{proof}
    We will reduce \NMTSshort\ to \RLPCshort.

    By \cref{lem:NMTS_hardness}, we can assume an \NMTSshort\ instance $I=(A=\{a_1,\ldots,a_n\},B=\{b_1,\ldots,b_n\},T=\{t_1,\ldots,t_n\})$ with distinct elements and
    \begin{equation}\label{NMTS_eq1}
        \min(B)>2\max(A),\quad \min(T)>\max(B).
    \end{equation}
    Notice that for any non-trivial \NMTSshort\ instance it must be
    \begin{equation}\label{NMTS_eq2}
        \sum(A)+\sum(B)=\sum(T)
    \end{equation}
    From~\eqref{NMTS_eq1} it additionally follows that
    \begin{equation}\label{NMTS_eq3}
        2\max(A)<\min(T).
    \end{equation}
    We will later use the above inequalities for the correctness of the reduction.

    We now construct an \RLPCshort\ instance based on $I$.
    Define $\ell=2\max(B)+1$ and construct an undirected graph $G$ as follows. Create a vertex $v$ and $2n$ vertices $x_1,\ldots, x_n,y_1,\ldots,y_n$. Next, connect~$v$ to each $x_i$ ($i\in [n]$) through a path of length $\ell - a_i$ and $v$ to each $y_i$ ($i\in [n]$) through a path of length $\ell - b_i$. Lastly, for each $t_i$ ($i\in [n]$), create a path of length\footnote{By~\eqref{NMTS_eq3}, this length is strictly positive.} $t_i-1$ consisting of vertices $z_1^i,\ldots z_{t_i}^i$ and connect $z_1^i$ and $z_{t_i}^i$ to $v$. We refer to \cref{fig:NMTS_reduction} for a sketch of the described construction. As the removal of $v$ only leaves a disjoint union of paths, it is easy to see that $\ddlf(G)=1$.

    \begin{figure}[ht]
        \centering
        \includegraphics[width=0.75\linewidth, page=3]{temporal.drawio.pdf}
        \caption{The reduction of \cref{thrm:restricted_length}, from \NMTSshort\ to \RLPCshort. The construction is only shown for a triplet $(a_i,b_j,t_k)$ with $a_i+b_j=t_k$ for simplicity. Paths of length $\ell$ start from the vertices $x_i,y_j$ and cover the red and blue vertices respectively. Equation~\eqref{NMTS_eq2} implies that the paths should not intersect, except at $v$. The same reduction can be applied to \TPCshort\ by assigning the label set $[\ell]$ to all edges, and to \TDPCshort\ through \cref{thrm:NMTS_distinct}, which essentially implies that $v$ is not occupied by two paths at any time unit. See \cref{thrm:TPC_pathwidth} for more details.}
        \label{fig:NMTS_reduction}
    \end{figure}
    
    We now prove the correctness of the reduction; specifically, that $I$ admits a solution if and only if $G$ has a path cover consisting of $2n$ paths, each of length at most $\ell=2\max(B)+1$.

    ($\Rightarrow$) Suppose that $I$ admits a solution, i.e., the elements in $A$, $B$ and $T$ can be matched in~$n$ triplets $(a_i,b_j,t_k)\in A\times B\times T$ such that $a_i+b_j=t_k$. For each such triplet, pick two paths of length $\ell$ each, starting from $x_i$ and $y_j$. The former path covers all vertices in the path from $x_i$ to $v$ and the latter covers all vertices in the path from $y_j$ to $v$; the former can then cover at most $a_i$ other vertices and the latter at most $b_j$ other vertices. Thus, we can have the former cover the vertices $z_1^k,\ldots,z_{a_i}^k$ and the latter cover the vertices $z_{t_k}^k,\ldots,z_{t_k-b_j+1}^k$. Notice that the equality $a_i+b_j=t_k$ immediately implies that all vertices in $G$ are covered, since $z_{t_k-b_j+1}^k \equiv z_{a_i+1}^k$.

    ($\Leftarrow$) Suppose $G$ has a path cover of $2n$ paths, each of length at most $\ell$. By construction, the distance between any two vertices in $\{x_1,\ldots, x_n,y_1,\ldots,y_n\}$ is at least
    \[
        2(\ell - \max(B))=2\max(B)+2>\ell.
    \] 
    
    Thus, it is impossible for any two vertices in $\{x_1,\ldots, x_n,y_1,\ldots,y_n\}$ to be covered by one path of length $\ell$. It follows that the $2n$ paths in the path cover have to start from the vertices $x_1,\ldots, x_n,y_1,\ldots,y_n$. Similar to the forward direction of the correctness proof, for $i\in [n],\ j\in [n]$, the path starting from $x_i$ can cover $a_i$ $z$-vertices and the path starting from $y_j$ can cover $b_j$ $z$-vertices. Combining this with~\eqref{NMTS_eq2}, it follows that each $z$-vertex can only be covered by a single path (otherwise, some $z$-vertex in $G$ would remain uncovered).
    Additionally, \eqref{NMTS_eq1} implies that, for all $k\in [n]$, the vertices $z_1^k,\ldots z_{t_k}^k$ cannot all be covered by the same path; thus, they have to be collectively covered by two paths (they cannot be covered by three or more paths, since each $z$-vertex can only be covered by one path).
    Due to~\eqref{NMTS_eq3}, these two paths cannot both start from $x$-vertices; thus, one has to start from an $x$-vertex and the other from a $y$-vertex.\footnote{If for some $k\in [n]$ both paths started from a $y$-vertex, this would imply that there exists a $k'\in[n]$, $k'\neq k$, for which the two paths covering the vertices $z_1^{k'},\ldots z_{t_{k'}}^{k'}$ both start from $x$-vertices, which is a contradiction.} We infer that, for all $k\in [n]$, the vertices $z_1^k,\ldots z_{t_k}^k$ are covered by paths starting from $x_i$ and $y_j$ such that $a_i+b_j=t_k$, which immediately implies that $I$ admits a solution (by matching the respective triplets $(a_i,b_j,t_k)$).

    Since \NMTSshort\ is \emph{strongly} \NP-complete, we can assume that $\ell=2\max(B)+1$ is polynomial in $n$. It follows that the described reduction runs in polynomial time. 
\end{proof}

\cref{thrm:restricted_length} implies that \RLPC\ remains \NP-hard even for graphs of constant treewidth,
in contrast to \PC, which is known to admit an \XP\ algorithm parameterized by treewidth~\cite{Foucaud_static_caldam}.

We next adapt the reduction of \cref{thrm:restricted_length} to \TPCshort\ and \TDPCshort.

\begin{theorem}\label{thrm:TPC_pathwidth}
    \TPCshort\ and \TDPCshort\ are \NP-hard even when $\ddlf(U_G)=1$ and the label set $\lambda$ is identical for all edges. 
\end{theorem}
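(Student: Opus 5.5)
We take the graph $G$ built in the proof of \cref{thrm:restricted_length} from an instance $I$ of \NMTSshort\ satisfying \cref{lem:NMTS_hardness}. We orient it, assign the label set $[\ell]$ to every edge (with $\ell=2\max(B)+1$), and ask for a cover of size $2n$. The orientation is as follows. The path from each $x_i$ and each $y_j$ is directed towards $v$. For each $k$, the edges $v z_1^k$ and $v z_{t_k}^k$ are directed away from $v$. Inside the $z$-path, the edges are directed from both ends towards a single position; to make both splitting points available, we instead make the inner $z$-path edges bidirected, and the temporal orientation then follows from traversal order. The underlying graph is unchanged, so $\ddlf(U_G)=1$, and all edges carry the same label set. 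A temporal path with labels in $[\ell]$ has at most $\ell$ edges, so every temporal path is a path of length at most $\ell$ in the underlying graph. The backward direction of \cref{thrm:restricted_length} then transfers almost verbatim.

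\emph{Backward direction.} Given a T(D)PC of size $2n$, forget time labels to obtain an \RLPCshort\ cover of size $2n$ with paths of length at most $\ell$. The counting argument of \cref{thrm:restricted_length} (pairwise distances among $x,y$ vertices exceed $\ell$, together with \eqref{NMTS_eq1}, \eqref{NMTS_eq2} and \eqref{NMTS_eq3}) yields the matching. Every temporally disjoint cover is in particular a temporal cover, so this direction covers both \TPCshort\ and \TDPCshort.

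\emph{Forward direction.} For a triplet $(a_i,b_j,t_k)$, the path from $x_i$ traverses its $\ell-a_i$ edges at times $1,\dots,\ell-a_i$, reaches $v$ at time $\ell-a_i+1$, and then covers $z_1^k,\dots,z_{a_i}^k$ using times up to $\ell$. Symmetrically, the path from $y_j$ reaches $v$ at time $\ell-b_j+1$ and covers $z_{t_k}^k,\dots,z_{a_i+1}^k$. Each uses exactly $\ell$ edges with strictly increasing labels in $[\ell]$, so both are valid temporal paths. This gives a TPC of size $2n$.

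For \TDPCshort\ we must also check temporal disjointness, which is the main obstacle. Paths share a vertex only at $v$, because the $x$-, $y$- and $z$-segments are otherwise distinct. Each path occupies $v$ during the single time unit $\ell-a_i+1$ or $\ell-b_j+1$. Since all $3n$ elements are distinct by \cref{lem:NMTS_hardness}, these $2n$ arrival times are pairwise distinct. Two paths sharing a $z$-path are disjoint there because they cover disjoint halves, so the cover is temporally disjoint. The remaining care point is the orientation inside the $z$-paths: either keep them undirected (the notion of temporal path works on the symmetric digraph) or note that bidirected edges do not spoil the backward argument, since it only uses lengths. Finally, polynomiality follows from strong \NP-completeness, as in \cref{thrm:restricted_length}.
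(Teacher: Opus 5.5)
Your proposal is correct and follows the paper's proof. You use the same graph from the restricted-length reduction with label set $[\ell]$ on every edge, obtain the backward direction from the fact that strict temporal paths with labels in $[\ell]$ have at most $\ell$ edges, and get temporal disjointness from the pairwise distinct occupation times $\ell-a_i+1$ and $\ell-b_j+1$ at $v$. The orientation detour is unnecessary, since the paper simply keeps the edges undirected, but your final bidirected choice is harmless, and your explicit time schedule spells out what the paper calls immediate.
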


\begin{proof}
    The reduction of \cref{thrm:restricted_length} can be applied to both \TPCshort\ and \TDPCshort\ with the following minor modification.
    For \TPCshort, it suffices to assign to each edge the label set $[\ell]$. It is immediate that this causes the length of any temporal path to be bounded by $\ell$, and all paths in $G$ with length up to $\ell$ correspond to a feasible temporal path in the described temporal graph.
    The same reduction also works for \TDPCshort, since the $2n$ paths starting from the vertices $x_1,\ldots, x_n,y_1,\ldots,y_n$ all occupy $v$ in different time units, due to the fact that $a_1,\ldots, a_n,b_1,\ldots,b_n$ are distinct numbers by \cref{lem:NMTS_hardness}.
\end{proof}

\cref{thrm:TPC_pathwidth} complements the \NP-hardness theorem for \TPCshort\ and \TDPCshort\ from~\cite{temporal_path_cover_Dailly_MFCS}, showing hardness even on highly structured graphs.
Note that \TDPCshort\ is already known to be \NP-hard on \emph{temporal oriented trees}~\cite{temporal_path_cover_Dailly_MFCS}, while a closer look into the proof reveals that for the constructed instance it holds that it has vertex-deletion distance $1$ to a forest whose every connected component is a star.
Our hardness result thus complements this by showing that the problem remains \NP-hard even when the graph is one vertex-deletion away from yet another highly restricted graph class,
this time under the additional restriction that all edges use the same label set.
As for \TPCshort, prior to our work it was not known whether it remains \NP-hard even for graphs of constant treewidth. Moreover, since $\ddlf(U_G)=1$ implies $\fvs(U_G)\leq 1$, \cref{thrm:TPC_pathwidth} highlights an interesting contrast: \TPCshort\ admits a polynomial-time algorithm on temporal oriented trees~\cite{temporal_path_cover_Dailly_MFCS}, i.e., when $\fvs(U_G)=0$, but becomes \NP-hard when $\fvs(U_G)=1$.

\subsection{Hardness for constant maximum degree and pathwidth}

To complement the aforementioned results, we present a slight variation of the previous reduction in order to prove that \RLPCshort\ is \NP-hard even on bounded-degree instances. The main idea is to replace the vertex $v$ in the graph of \cref{fig:NMTS_reduction} with a cycle, in order to force every vertex in the graph to have constant degree.

\begin{theorem}\label{thrm:restricted_length2}
    \RLPCshort\ is \NP-hard even in graphs with $d_{\max}(G)=3$ and
    $\pw(G)=3$.
\end{theorem}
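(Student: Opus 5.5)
Our plan is to adapt the reduction of \cref{thrm:restricted_length}, again starting from an \NMTSshort\ instance satisfying \cref{lem:NMTS_hardness}, and replacing the high-degree vertex $v$ by a cycle $Z=(c_1,\ldots,c_{3n},c_1)$. Each of the $3n$ former attachments to $v$ is given its own cycle vertex. Concretely, the path from $x_i$ (resp.\ $y_j$) ends at a distinct cycle vertex, and for each $t_k$ the two ends $z^k_1$ and $z^k_{t_k}$ attach to cycle vertices. Every cycle vertex then has exactly one pendant attachment, so $d_{\max}(G)=3$.

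Moving along the cycle costs length, so we rescale all numbers before building anything. We multiply every element of $A$, $B$, $T$ by a factor $M$ large compared with $n$ (say $M=10n$) and take $\ell$ accordingly. We then add a slack to absorb the cycle traversal. We would place the attachment points so that each $x$-path and each $y$-path can reach the attachment of any $t$-gadget using at most $3n/2<M/2$ cycle edges. To make the padding exact, every path may "waste" up to its cycle distance; we compensate by letting the $z$-gadget for $t_k$ have length $Mt_k$. The lengths of the $x_i$ and $y_j$ arms are set to $\ell-Ma_i-D$ and $\ell-Mb_j-D$, where $D$ is a budget of $3n$ cycle steps.

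Completeness, given a valid triple matching, proceeds as follows. The path from $x_i$ runs along its arm, walks along the cycle (using at most $D$ edges, padding unused budget is fine since lengths are upper bounds), and enters the $t_k$ gadget from one end, covering $Ma_i$ vertices. The $y_j$ path enters from the other end and covers $Mb_j$ vertices. The cycle vertices themselves must also be covered; we would have the cycle walks collectively sweep the cycle, or alternatively add $O(1)$ extra paths and adjust the target count.

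Soundness reuses the distance argument: arm ends are pairwise farther apart than $\ell$, so the $2n$ paths start at distinct $x$/$y$ ends. A path from $x_i$ covers at most $Ma_i+D$ gadget vertices. Since $nD<M$, the rounding forces each $t_k$ gadget to be split as $Ma_i+Mb_j$ up to an error smaller than $M$. This yields $a_i+b_j=t_k$ exactly, with the inequalities \eqref{NMTS_eq1}--\eqref{NMTS_eq3} again ruling out two $x$-paths or two $y$-paths sharing a gadget.

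For pathwidth, the graph is a cycle with pendant paths, plus gadget paths attached by both ends, which form extra cycles. We would give an explicit path decomposition sweeping the main cycle, keeping $c_1$ in all bags to cut the cycle. Each gadget path $z^k$ attaches at two cycle vertices, so we choose its two attachment points to be consecutive cycle vertices. Then the gadget plus its two anchors is a cycle hanging off an edge, and we process it with bags containing the current cycle edge, $c_1$, and a sliding pair on the gadget. This gives width 3, with attention to bag size. The main obstacle is the bookkeeping that makes cycle traversal costs negligible and consistent with consecutive gadget anchors: $x$/$y$ paths must reach arbitrary gadgets, and the cycle vertices must still be covered without extra paths.
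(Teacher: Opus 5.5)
Your construction is essentially the paper's. The paper also replaces $v$ by a cycle $v_1,\ldots,v_{4n}$ and hangs the $x_i$-arm at $v_i$ and the $y_j$-arm at $v_{n+j}$. It attaches the two ends of the $t_k$-gadget to the consecutive vertices $v_{2n+2k-1},v_{2n+2k}$, scales lengths by $4n+1$ with a slack of $4n$ for the cycle walk, and concludes by the same integrality rounding. As written, however, four steps of your proposal are wrong or left open and need repair.

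\textbf{Cycle length.} $v$ has $4n$ attachments, not $3n$, because each $t_k$-gadget contributes two ends. So the cycle needs $4n$ vertices; with only $3n$, some cycle vertex gets two attachments and degree $4$.

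\textbf{Rounding.} The inequality $nD<M$ you invoke is false for $M=10n$, $D=3n$ once $n\ge 4$, and it is not what the argument needs. Argue per gadget instead. A path covers at most $M\max(B)+D<M\min(T)$ gadget vertices, so it cannot traverse a whole gadget and hence meets at most one gadget. Every gadget therefore needs at least two paths, and counting ($2n$ paths, $n$ gadgets) gives exactly two. Then $M(a_i+b_j)+2D\ge Mt_k$ together with $2D<M$ yields $a_i+b_j\ge t_k$, and \eqref{NMTS_eq2} forces equality everywhere.

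\textbf{Covering the cycle.} This is not an obstacle. Every cycle vertex anchors exactly one pendant piece. The $x_i$-path and the $y_j$-path pass through their anchors, and the two paths entering the $t_k$-gadget pass through its two anchors. So all cycle vertices are covered automatically, which is what the paper uses implicitly. Avoid your fallback of adding extra paths: an extra path of length up to $\ell$ could cover many gadget vertices and would break the counting in the soundness direction.

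\textbf{Pathwidth.} A bag holding $c_1$, both ends of the current cycle edge and a sliding pair has five vertices, i.e.\ width $4$. For a gadget anchored at $c_a,c_{a+1}$, use bags $\{c_1,c_a,z_j,z_{j+1}\}$ and close with $\{c_1,c_a,z_m,c_{a+1}\}$. Along an arm hanging at $c_i$, use $\{c_1,c_i,w_j,w_{j+1}\}$. This gives width $3$.
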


\begin{proof}
    Similarly to the proof of \cref{thrm:restricted_length}, we will reduce an \NMTSshort\ instance $I=(A=\{a_1,\ldots,a_n\},B=\{b_1,\ldots,b_n\},T=\{t_1,\ldots,t_n\})$ with distinct elements and satisfying properties \eqref{NMTS_eq1}, \eqref{NMTS_eq2} and \eqref{NMTS_eq3}.

    We construct an $\RLPCshort$ instance based on $I$. Define $\ell=2\max(B)+1$ and construct an undirected graph $G$ as follows. Create a cycle $C$ of length $4n$ $v_1,\dots,v_{4n}$ and $2n$ vertices $x_1,\dots,x_n,y_1,\dots,y_n$. Next, connect each $x_i$ to $v_i$ through a path of length $(4n+1)(\ell-a_i)$, then connect each $y_j$ to $v_{n+j}$ through a path of length $(4n+1)(\ell-b_j)$. Lastly, for each $t_k$ $(k\in [n])$, create a path of length $(4n+1)(t_k+1)-2$ consisting of vertices $z_1^k,\ldots, z_{(4n+1)(t_k+1)-1}^k$ and connect $z_1^k$ and $z_{(4n+1)(t_k+1)-1}^k$ to $v_{2n+2k-1}$ and $v_{2n+2k}$ respectively. Note that all vertices of $C$ have degree $3$, and all other vertices have degree $2$ or $1$. It is straightforward to prove that $\pw(G)=3$.

    We now prove the correctness of the reduction in a very similar fashion as in the proof of \cref{thrm:restricted_length}. We prove that $I$ admits a solution if and only if $G$ has a path cover consisting of $2n$ paths, each of length at most $(4n+1)\ell+4n$.

    ($\Rightarrow$) Suppose that $I$ admits a solution, i.e., the elements of $A$, $B$ and $T$ can be matched in $n$ triplets $(a_i,b_j,t_k)\in A\times B\times T$ such that $a_i+b_j=t_k$. For each such triplet, pick two paths starting from $x_i$ and $y_j$ of size at most $(4n+1)\ell+4n$. The former path covers all vertices in the path from $x_i$ to $v_i$ and the latter covers all vertices in the path from $y_j$ to $v_{n+j}$; the distances between $v_i$ and $v_{2n+2k-1}$, and $v_{n+j}$ and $v_{2n+2k}$ (through $C$) are both bounded by $|C|/2=2n$. Therefore, going through $C$, the former path can cover vertices $z_1^k,\dots,z_{(4n+1)a_i+2n}^k$, while the latter path can cover vertices $z_{(4n+1)(t_k+1)-1}^k,\dots,z_{(4n+1)(t_k-b_j)+2n+1}^k$.
    Notice that the equality $a_i+b_j=t_k$ implies that all vertices in $G$ are covered, since $z_{(4n+1)(t_k-b_j)+2n}^k\equiv z_{(4n+1)a_i +2n}^k$.

    ($\Leftarrow$) Suppose $G$ has a path cover of $2n$ paths, each of length at most $(4n+1)\ell+4n$. By construction , the distance between any two vertices in $\{x_1,\dots,x_n,y_1,\dots,y_n\}$ is at least
    \[
        2(4n+1)(\ell - \max(B))
        =(4n+1)(2\max(B)+2)>(4n+1)\ell+4n.
    \]

    Thus, it is impossible for any two vertices in $\{x_1,\dots,x_n,y_1,\dots,y_n\}$ to be covered by one path of length $(4n+1)\ell+4n$. It follows that the $2n$ paths in the path cover have to start from the vertices $x_1,\dots,x_n,y_1,\dots,y_n$. Similar to the forward direction of the correctness proof, for $i\in [n]$, $j\in [n]$, the path starting from $x_i$ can cover between $(4n+1)a_i+2n$ and $(4n+1)a_i+4n$ $z$-vertices and the path starting from $y_j$ can cover between $(4n+1)b_j+2n$ and $(4n+1)b_j+4n$ $z$-vertices.

    Therefore, due to~\eqref{NMTS_eq1}, for $k\in [n]$, at least two paths are needed to cover $z_1^k,\dots,z_{(4n+1)t_k}^k$. Furthermore, for $k\in[n]$, a path cannot cover some vertices of $z_1^k,\dots,z_{(4n+1)(t_k+1)-1}^k$ and, for $k'\in [n]$, $k\neq k'$, some vertices of $z_1^{k'},\dots,z_{(4n+1)(t_{k'}+1)-1}^{k'}$, as it would have to enter $z_1^k,\dots,z_{(4n+1)(t_k+1)-1}^k$ through either $z_1^k$ or $z_{(4n+1)(t_k+1)-1}^k$, and leave from the other, which is impossible due to $\eqref{NMTS_eq1}$. Hence, as there are only $2n$ paths, for every $k\in [n]$, $z_1^k,\dots,z_{(4n+1)(t_k+1)-1}^k$ are covered by exactly two paths. Due to~\eqref{NMTS_eq3}, these two paths cannot both start from $x$-vertices; thus, one has to start from an $x$-vertex and the other from a $y$-vertex. Let $x_i$ and $y_j$ be those vertices. We have:
    \[
        (4n+1)a_i+4n+(4n+1)b_j+4n
        \geq (4n+1)(t_k+1)-1
        \iff a_i+b_j+\frac{4n}{4n+1}\geq t_k.
    \]
    Since $a_i,b_j,t_k$ are integers, this implies
    \[
        a_i+b_j\geq t_k.
    \]
    Since this holds for every $k\in [n]$, we obtain the following in conjunction with~\eqref{NMTS_eq2}.
    \[
        a_i+b_j=t_k,
    \]
    which implies that $I$ admits a solution by matching the respective triplets.
\end{proof}

Lastly, we adapt \cref{thrm:restricted_length2} to TPC and TDPC.

\begin{theorem}\label{thrm:TPC_pathwidth2}
    \TPCshort\ and \TDPCshort\ are \NP-hard even when $d_{\max}(U_G)=3$ and $\pw(U_G)=3$ and the label set $\lambda$ is identical for
    all edges.
\end{theorem}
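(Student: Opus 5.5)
We mirror how \cref{thrm:TPC_pathwidth} was derived from \cref{thrm:restricted_length}. Take the graph $G$ built in \cref{thrm:restricted_length2} from an \NMTSshort\ instance with distinct elements satisfying \eqref{NMTS_eq1}--\eqref{NMTS_eq3}. Let $L:=(4n+1)\ell+4n$ be the length bound used there. We orient nothing: each undirected edge becomes the two opposite arcs, or equivalently we work with the undirected temporal model of \cref{cor:TPC_undirected_hardness}. Every edge receives the same label set $[L]$, and we ask for a cover with $2n$ paths.

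The underlying graph is $G$ itself, so $d_{\max}(U_G)=3$ and $\pw(U_G)=3$ are inherited.

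\textbf{Correctness for \TPCshort.} Since the labels are strictly increasing within $[L]$, every temporal path has at most $L$ edges. Conversely, any static path with at most $L$ edges can be realized temporally by using time-step $i$ on its $i$-th edge. Hence temporal path covers of size $2n$ correspond exactly to \RLPCshort\ solutions of size $2n$ with length bound $L$. Both directions then follow verbatim from \cref{thrm:restricted_length2}.

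\textbf{Correctness for \TDPCshort.} The backward direction is immediate, because every \TDPCshort\ is a \TPCshort. For the forward direction, we take the $2n$ paths of the \RLPCshort\ solution and schedule them from time $1$ along consecutive edges, with no waiting. We must show they are temporally disjoint, and this is the step that needs care. The paths cannot be shortened to hit a single hub vertex at distinct times as in \cref{thrm:TPC_pathwidth}, because here they share the cycle $C$.

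The path from $x_i$ enters $C$ at $v_i$ at time $(4n+1)(\ell-a_i)+1$. The path from $y_j$ enters $v_{n+j}$ at time $(4n+1)(\ell-b_j)+1$. All $a_i,b_j$ are distinct, so these entry times are distinct multiples of $4n+1$ shifted by $1$, and any two differ by at least $4n+1$. Each path spends at most $2n<4n+1$ time units on $C$ before leaving into its $z$-path. Therefore two paths are never on $C$ in overlapping time windows, which rules out any conflict on $C$.

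The $z$-segments need a separate check. Each $z$-path is entered by one path from each end, and these two paths cover disjoint vertex sets. The only remaining overlap would be at a shared boundary vertex, and we can avoid it by truncating one of the paths by a vertex. Outside $C$ and the $z$-paths, each path has its own private tail, so no other conflicts arise.

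If the exact time windows prove delicate, we can instead route every path clockwise on $C$ only. Because the path segments are time-shifted by at least $4n+1$, a vertex is never occupied twice in any case.
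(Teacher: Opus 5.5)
Your proof is correct and matches the paper's omitted argument: you give every edge of the graph from \cref{thrm:restricted_length2} the common label set $[(4n+1)\ell+4n]$, inherit \TPCshort\ correctness from the \RLPCshort\ reduction, and get temporal disjointness for \TDPCshort\ from the distinctness of the $a_i,b_j$. You also spell out the step the paper leaves implicit: entry times into $C$ differ by at least $4n+1$, while each path occupies $C$ for at most $2n+1$ time units. Drop the closing ``clockwise only'' fallback, which your main argument does not need, because forcing one direction on $C$ can cost up to $4n-2$ cycle steps, and then the two paths serving some $t_k$ may exceed the length bound $(4n+1)\ell+4n$.
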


The proof of \cref{thrm:TPC_pathwidth2} is almost identical to that of \cref{thrm:TPC_pathwidth}, and is thus omitted.

\subsection{Hardness for constant vertex cover number}

By \cref{thrm:TPC_vc_tau}, \TPCshort\ is \FPT\ parameterized by
$\vc(U_G)+t_{\max}$; the same holds for \TDPCshort\
by~\cite{temporal_path_cover_Dailly_MFCS}. We now show that the dependence on
$t_{\max}$ is necessary: both problems are already \NP-hard when
$\vc(U_G)=3$.

We reduce from \NMTSshort. Given three sets
$A=\{a_1,\ldots,a_n\}$, $B=\{b_1,\ldots,b_n\}$, and
$T=\{t_1,\ldots,t_n\}$, the problem asks whether their elements can be
partitioned into triples $(a_i,b_j,t_k)$ satisfying $a_i+b_j=t_k$.
Our construction is organised around three vertices $x_1,x_2,x_3$. For each
$a_i$ and $t_k$, we introduce vertices $\alpha_i$ and $\beta_k$, respectively.
For each $b_j$, we reserve a separate time block and introduce a source $p_j$
and a sink $q_j$. A path
\[
    p_j\to x_1\to\alpha_i\to x_2\to\beta_k\to x_3\to q_j
\]
then selects $(a_i,b_j,t_k)$ and is temporal exactly when
$a_i+b_j\leq t_k$. A cover of size $n$ uses every element of $A,B,T$ exactly
once; since $\sum A+\sum B=\sum T$, all selected inequalities must be
equalities. Moreover, the paths use disjoint time blocks, so the same
construction applies to \TDPCshort. We refer to \cref{fig:vc_reduction} for a sketch of the construction.

\begin{figure}[ht]
    \centering
    \begin{tikzpicture}[
        >=stealth, xscale=1.1, yscale=1.0,
        cover/.style={circle, draw, thick, fill=orange!30,
                      inner sep=0pt, minimum size=7mm, font=\small},
        src/.style={circle, draw, semithick, fill=white,
                    inner sep=0pt, minimum size=6.5mm, font=\small},
        snk/.style={circle, draw, semithick, fill=black!12,
                    inner sep=0pt, minimum size=6.5mm, font=\small},
        av/.style={circle, draw, semithick, fill=cyan!20,
                   inner sep=0pt, minimum size=6.5mm, font=\small},
        tv/.style={circle, draw, semithick, fill=green!25,
                   inner sep=0pt, minimum size=6.5mm, font=\small},
        fan/.style={->, gray!55, thin},
        spine/.style={->, line width=1.1pt, red!75!black},
        tlab/.style={font=\scriptsize, inner sep=1.2pt, fill=white},
        layer/.style={font=\footnotesize},
    ]
        \def\yT{1.4}\def\yB{-1.4}
        \node[src] (p1) at (0,\yT) {$p_1$};
        \node[src] (pj) at (0,0)   {$p_j$};
        \node[src] (pn) at (0,\yB) {$p_n$};

        \node[cover] (x1) at (1.7,0) {$x_1$};

        \node[av] (a1) at (3.4,\yT) {$\alpha_1$};
        \node[av] (ai) at (3.4,0)   {$\alpha_i$};
        \node[av] (an) at (3.4,\yB) {$\alpha_n$};

        \node[cover] (x2) at (5.1,0) {$x_2$};

        \node[tv] (b1) at (6.8,\yT) {$\beta_1$};
        \node[tv] (bk) at (6.8,0)   {$\beta_k$};
        \node[tv] (bn) at (6.8,\yB) {$\beta_n$};

        \node[cover] (x3) at (8.5,0) {$x_3$};

        \node[snk] (q1) at (10.2,\yT) {$q_1$};
        \node[snk] (qj) at (10.2,0)   {$q_j$};
        \node[snk] (qn) at (10.2,\yB) {$q_n$};

        \foreach \x in {0,3.4,6.8,10.2}{
            \node at (\x, 0.7)  {$\vdots$};
            \node at (\x,-0.7)  {$\vdots$};
        }

        \foreach \n in {p1,pn}{\draw[fan] (\n) -- (x1);}
        \foreach \n in {a1,an}{\draw[fan] (x1) -- (\n); \draw[fan] (\n) -- (x2);}
        \foreach \n in {b1,bn}{\draw[fan] (x2) -- (\n); \draw[fan] (\n) -- (x3);}
        \foreach \n in {q1,qn}{\draw[fan] (x3) -- (\n);}

        \draw[spine] (pj) -- node[tlab]{$1$}            (x1);
        \draw[spine] (x1) -- node[tlab]{$2$}            (ai);
        \draw[spine] (ai) -- node[tlab]{$a_i{+}b_j{+}2$}(x2);
        \draw[spine] (x2) -- node[tlab]{$t_k{+}3$}      (bk);
        \draw[spine] (bk) -- node[tlab]{$M{+}4$}        (x3);
        \draw[spine] (x3) -- node[tlab]{$M{+}5$}        (qj);

        \node[layer] at (0,   -2.05) {$B$};
        \node[layer] at (3.4, -2.05) {$A$};
        \node[layer] at (6.8, -2.05) {$T$};
        \node[layer] at (10.2,-2.05) {$B$};
        \node[layer, text=orange!60!black] at (5.1,-2.05) {vertex cover};
    \end{tikzpicture}
    \caption{The reduction of \cref{thrm:TPC_vc_hardness}. The highlighted
        path $P_{i,j,k}$ selects $(a_i,b_j,t_k)$; the displayed labels are its
        offsets within block~$j$, so it is temporal exactly when
        $a_i+b_j\leq t_k$. Every edge has an endpoint in
        $\{x_1,x_2,x_3\}$.}
    \label{fig:vc_reduction}
\end{figure}

\begin{theorem}\label{thrm:TPC_vc_hardness}
    \TPCshort\ and \TDPCshort\ are \NP-hard even on temporal DAGs with $\vc(U_G) = 3$.
\end{theorem}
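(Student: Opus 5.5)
The plan is a polynomial reduction from \NMTSshort, using the strongly \NP-complete version of \cref{lem:NMTS_hardness}, so that all numbers are polynomial in $n$. I will realize the construction sketched before the statement and in \cref{fig:vc_reduction}. The vertices are $x_1,x_2,x_3$, the $\alpha_i$ ($i\in[n]$), the $\beta_k$ ($k\in[n]$), and the $p_j,q_j$ ($j\in[n]$). Let $M:=\max(T)+\max(A)+\max(B)$ and give block $j$ the offset $O_j:=(j-1)(M+6)$, so the windows $[O_j+1,O_j+M+5]$ are pairwise disjoint and increase with $j$. The edges and their labels are:
\begin{itemize}
\item $p_j\to x_1$ at $O_j+1$ only;
\item $x_1\to\alpha_i$ at $O_j+2$ for every $j$;
\item $\alpha_i\to x_2$ at $O_j+a_i+b_j+2$ for every $j$;
\item $x_2\to\beta_k$ at $O_j+t_k+3$ for every $j$;
\item $\beta_k\to x_3$ at $O_j+M+4$ for every $j$;
\item $x_3\to q_j$ at $O_j+M+5$ only.
\end{itemize}
All edges point forward in the order $p,x_1,\alpha,x_2,\beta,x_3,q$, so $U_G$ is a DAG, and $\{x_1,x_2,x_3\}$ is a vertex cover. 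The target is $n$ paths. Within block $j$, the path $P_{i,j,k}=p_j x_1\alpha_i x_2\beta_k x_3 q_j$ is temporal iff $a_i+b_j+2<t_k+3$, i.e.\ iff $a_i+b_j\le t_k$.

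For the forward direction, a solution gives $n$ triples $(a_i,b_j,t_k)$ with $a_i+b_j=t_k$. I take the corresponding $P_{i,j,k}$, using block $j$. These $n$ paths cover everything, because each element of $A$, $B$ and $T$ is used once and every path passes through $x_1,x_2,x_3$. For \TDPCshort, the paths lie in distinct blocks, so each occupies vertices only during its own window. Hence they are temporally disjoint; the shared $x$-vertices are occupied at different times.

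For the backward direction, take a cover with at most $n$ paths. The $p_j$ are $n$ sources and the $q_j$ are $n$ sinks, so each path contains exactly one $p$ and exactly one $q$. Every path goes through $x_1$ at most once and, by the orientation, contains at most one $\alpha$ and one $\beta$. Covering $n$ $\alpha$'s and $n$ $\beta$'s with $n$ paths therefore forces every path to have the form $p_j x_1\alpha_i x_2\beta_k x_3 q_{\pi(j)}$, where $\pi$ and the induced maps $j\mapsto i$ and $j\mapsto k$ are bijections.

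The main obstacle is excluding paths that drift across blocks; this is the step to check carefully. My argument: a path starting at $p_j$ leaves at time $O_j+1$, and labels strictly increase along a temporal path. It therefore ends at $q_{\pi(j)}$ at time $O_{\pi(j)}+M+5>O_j+1$, which forces $\pi(j)\ge j$. A permutation with $\pi(j)\ge j$ for every $j$ is the identity. So every path starts and ends in block $j$. Since the windows are disjoint and ordered, all intermediate edges must also use their block-$j$ copies. This yields $a_i+b_j\le t_k$ for each of the $n$ triples. Summing these inequalities and comparing with $\sum A+\sum B=\sum T$ forces equality in every triple, which gives an \NMTSshort\ solution. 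A \TDPCshort\ is in particular a \TPCshort, so the same argument covers both problems.

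Finally, every label is bounded by $n(M+6)$, which is polynomial in $n$ by strong \NP-hardness, so the reduction runs in polynomial time.
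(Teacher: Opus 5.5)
Your proposal is correct and follows the paper's proof essentially step for step: the same layered construction with block offsets $(j-1)(M+6)$, the same source/sink counting, the same $\pi(j)\ge j$ argument forcing the identity permutation and then block confinement, and the same summation of the $a_i+b_j\le t_k$ inequalities. Your choice $M=\max(T)+\max(A)+\max(B)$ is only a cosmetic variant of the paper's $M=\max\{\max_{i,j}(a_i+b_j),\max_k t_k\}$. The one thing to add is an explicit statement, as the paper makes, that you may assume $\sum A+\sum B=\sum T$, since otherwise the instance is trivially negative and can be mapped to a fixed negative instance.
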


\begin{proof}
    We reduce from \NMTSshort, which is strongly \NP-hard by
    \cref{thrm:NMTS_distinct}. Let
    \[
        A=\{a_1,\ldots,a_n\},\qquad
        B=\{b_1,\ldots,b_n\},\qquad
        T=\{t_1,\ldots,t_n\}
    \]
    be an instance of \NMTSshort. We may assume that
    \begin{equation}\label{eq:vc_total_sum}
        \sum_{i\in[n]}a_i+\sum_{j\in[n]}b_j
        =\sum_{k\in[n]}t_k,
    \end{equation}
    since otherwise the instance is negative.

    \proofsubparagraph{Construction.}
    Introduce three vertices $x_1,x_2,x_3$. For every $a_i\in A$ and
    $t_k\in T$, introduce vertices $\alpha_i$ and $\beta_k$, respectively.
    For every $b_j\in B$, introduce a source $p_j$ and a sink $q_j$; these
    vertices anchor one path to the time block associated with $b_j$. Let
    \[
        M:=\max\!\left\{
            \max_{i,j\in[n]}(a_i+b_j),\,
            \max_{k\in[n]}t_k
        \right\},
        \qquad H:=M+6,
        \qquad L_j:=(j-1)H.
    \]
    We associate with $b_j$ the time block beginning at $L_j+1$. The arcs and
    their label sets are
    \[
    \begin{array}{rclcrcl}
        \lambda(p_j,x_1)       &=& \{L_j+1\},
        &\qquad&
        \lambda(x_1,\alpha_i)  &=& \setdef{L_h+2}{h\in[n]},\\[1ex]
        \lambda(\alpha_i,x_2)  &=&
            \setdef{L_h+a_i+b_h+2}{h\in[n]},
        &&
        \lambda(x_2,\beta_k)  &=&
            \setdef{L_h+t_k+3}{h\in[n]},\\[1ex]
        \lambda(\beta_k,x_3)  &=&
            \setdef{L_h+M+4}{h\in[n]},
        &&
        \lambda(x_3,q_j)       &=& \{L_j+M+5\}.
    \end{array}
    \]
    Set the target cover size to $K:=n$.

    For $i,j,k\in[n]$, let $P_{i,j,k}$ be the path
    $p_j\to x_1\to\alpha_i\to x_2\to\beta_k\to x_3\to q_j$,
    using labels from block $j$. By the choice of $M$, all consecutive
    labels are increasing except possibly those on
    $\alpha_i x_2$ and $x_2\beta_k$. Consequently,
    \begin{equation}\label{eq:vc_path_inequality}
        P_{i,j,k}\text{ is temporal}
        \quad\Longleftrightarrow\quad
        L_j+a_i+b_j+2<L_j+t_k+3
        \quad\Longleftrightarrow\quad
        a_i+b_j\leq t_k.
    \end{equation}

    \begin{lemma}\label{lem:vc_NMTS_to_cover}
        If the \NMTSshort\ instance is positive, then $G$ has a temporally
        disjoint path cover of size $n$.
    \end{lemma}

    \begin{nestedproof}
        Let $\mathcal{M}$ be a solution of the \NMTSshort\ instance. For every
        $(a_i,b_j,t_k)\in\mathcal{M}$, select $P_{i,j,k}$. By
        \cref{eq:vc_path_inequality}, each selected path is temporal, and the
        paths cover every vertex because $\mathcal{M}$ uses every element of
        $A,B,T$ exactly once.

        Moreover, the path associated with $b_j$ occupies vertices only
        during $[L_j+1,L_j+M+6]$. These intervals are pairwise disjoint, so
        the selected paths are temporally disjoint.
    \end{nestedproof}

    \begin{lemma}\label{lem:vc_cover_to_NMTS}
        If $G$ has a temporal path cover of size at most $n$, then the
        \NMTSshort\ instance is positive.
    \end{lemma}

    \begin{nestedproof}
        Let $\mathcal{C}$ be such a cover. The vertices
        $p_1,\ldots,p_n$ have in-degree zero, and no path contains two of
        them. Hence every cover has at least $n$ paths. Since
        $|\mathcal{C}|\leq n$, it consists of exactly $n$ paths, one starting
        at each $p_j$. Similarly, no path contains two of the sinks
        $q_1,\ldots,q_n$. Since all $n$ sinks must be covered, every path
        contains exactly one of them.

        \begin{claim}\label{clm:vc_block_confinement}
            For every $j\in[n]$, the path starting at $p_j$ ends at $q_j$ and
            uses only labels from block $j$.
        \end{claim}

        \begin{claimproof}
            Let the path starting at $p_j$ end at $q_{\rho(j)}$. Since every
            sink occurs on exactly one path, $\rho$ is a permutation of
            $[n]$. If $\rho(j)=h<j$, then its last label is
            \[
                L_h+M+5<L_j+1,
            \]
            which precedes its first label. Hence $\rho(j)\geq j$ for every
            $j$, and therefore $\rho$ is the identity.

            The path from $p_j$ to $q_j$ starts at time $L_j+1$ and ends at
            time $L_j+M+5$. Every label associated with an earlier block is
            at most $L_j-1$, while every label associated with a later block
            is at least $L_j+M+7$. Thus all its labels belong to block $j$.
        \end{claimproof}

        By the layered structure and \cref{clm:vc_block_confinement}, the path
        starting at $p_j$ is $P_{i,j,k}$ for some $i,k\in[n]$. Hence
        \cref{eq:vc_path_inequality} gives $a_i+b_j\leq t_k$.
        Thus every element of $B$ is used once. Moreover, the $n$ paths each
        contain one $\alpha$-vertex and one $\beta$-vertex. Since all such
        vertices are covered, they also use every element of $A$ and $T$
        exactly once. Summing the resulting inequalities and applying
        \cref{eq:vc_total_sum} shows that all of them hold with equality.
        The corresponding triples form an \NMTSshort\ solution.
    \end{nestedproof}

    The two lemmas establish correctness for both problems: the forward
    direction constructs a temporally disjoint cover, while the reverse
    direction applies to every temporal path cover.

    The construction has $4n+3$ vertices, $6n$ arcs, and $O(n^2)$ labels.
    Since \NMTSshort\ is strongly \NP-hard, all labels are polynomially
    bounded, and the reduction is polynomial.

    Finally, every arc has an endpoint in
    $X:=\{x_1,x_2,x_3\}$, so $\vc(U_G)\leq3$. Ordering the vertex layers as
    \[
        \{p_1,\ldots,p_n\},\{x_1\},\{\alpha_1,\ldots,\alpha_n\},
        \{x_2\},\{\beta_1,\ldots,\beta_n\},\{x_3\},
        \{q_1,\ldots,q_n\}
    \]
    gives a topological ordering of $G$. This completes the proof.
\end{proof}

\section{Conclusion}

In this work we showed that the Dilworth and TD-Dilworth properties guarantee tractability for \TPCshort\ and \TDPCshort\ respectively. However, since the respective algorithms rely on computing Lov{\'a}sz numbers, we only find the \emph{size} of the minimum \TPCshort\ or \TDPCshort. The next natural question to ask is whether the \emph{search version} of these problems is also tractable, that is, whether the respective path cover can also be retrieved in polynomial time for the case where the (TD-)Dilworth property is promised.
Another interesting question involves the complexity of T(D)PC when the Dilworth property holds approximately, in a sense; specifically, do the problems become \NP-hard even when the difference between the size of the minimum T(D)PC and the size of the maximum antichain is constant, or can this difference be used as parameter (cf.~\cite{stoc/FominGJKSS26})?

We additionally answer an open question by Chakraborty et al.~\cite{temporal_path_cover_Dailly_MFCS} by proving that \TPCshort\ parameterized by $\tw + t_{\max}$ does not admit an \FPT\ algorithm under standard parameterized complexity assumptions, contrasting the status of \TDPCshort\ for which such an algorithm is known~\cite{temporal_path_cover_Dailly_MFCS}. We remark that a similar question for the static \PC\ problem remains open; that is, is the \XP\ algorithm of Foucaud et al.~\cite{Foucaud_static_caldam} parameterized by treewidth optimal or does the problem admit an \FPT\ one?
Moreover, can we obtain conditional lower bounds for \TPCshort\ and \TDPCshort\ with the parameterization by $\tw + t_{\max}$, or improve the running times of the respective algorithms from~\cite{temporal_path_cover_Dailly_MFCS}?

One last question we would like to highlight is whether \cref{thrm:TPC_vc_hardness} is tight, that is, whether \TPCshort\ and \TDPCshort\ remain \NP-hard even when $\vc(U_G) \le 2$.

\subsubsection*{Acknowledgements}
The authors are grateful to the organizers of IWOCA 2026, held at Université Clermont Auvergne. In particular, we would like to thank Antoine Dailly and Florent Foucaud for suggesting the open questions tackled here in the open problem session of the collaborative workshop of IWOCA 2026, as well as Anni Hakanen for her contribution to the discussion.

\medskip

\noindent
S.K. and C.P. gratefully acknowledge the financial support provided through the IWOCA 2026 Extended Stay Support Scheme, which enabled a research visit to Université Paris Dauphine, where part of this work was carried out. The authors also thank Laurent Gourvès and Michael Lampis for organizing the visit and facilitating the financial support that made it possible.

\medskip

\noindent
L.C. was partially funded by Project 58523\_DIPECC\_23\_27 - ``Finanziamento Dipartimenti di Eccellenza 2023-2027'' - B13C22004500001.

\medskip

\noindent
L.C. was partially funded by MUR of Italy, under PRIN Project n. 2022ME9Z78 - NextGRAAL: Next-generation algorithms for constrained GRAph visuALization.

\medskip

\noindent
S.K., A.P., and C.P. were partially supported by project MIS 5154714 of the National Recovery and Resilience Plan Greece 2.0 funded by the European Union under the NextGenerationEU Program.

\medskip

\noindent
E.N. and M.V. were partially supported by the ANR project ANR-21-CE48-0022 (S-EX-AP-PE-AL).

\medskip

\noindent
S.K., A.P., C.P., and M.V. were partially supported by the Partenariat Hubert Curien France – Grèce 2025 programme ``Aristote'', funded by the Ministry of Europe and Foreign Affairs (MEAE) and the Ministry of Higher Education, Research, and Innovation (MESRI) of France, and the State Scholarships Foundation (IKY) of Greece, via the Cooperation and Cultural Action Service of the French Embassy, and the French Institute of Greece.

\medskip

\noindent
S.K. and M.V. were partially supported by the project COAL-GAS,
financed under the Global Seed Fund program of Université Paris Sciences et Lettres (PSL) and by the CNRS.



\bibliography{bibliography}

\end{document}